\documentclass[11pt]{article}
\pdfoutput=1
\usepackage[utf8]{inputenc}
\usepackage{hyperref,dcolumn,bbm,url,booktabs,braket,microtype,aliascnt,mathtools,mathrsfs,etoolbox,cancel,capt-of,algpseudocode,float,newfloat,amsmath,amssymb,color,amsthm,fullpage,cleveref}
\usepackage[T1]{fontenc}
\usepackage[USenglish]{babel}
\usepackage{ marvosym }
\usepackage{tabularx}
\usepackage{array, ltablex, multirow}
\usepackage{fancyvrb}
\usepackage{adjustbox}
\usepackage{subcaption}
\usepackage{hyperref}
\usepackage{enumitem}
\usepackage{amsmath, amssymb, braket, mathrsfs}
\usepackage{physics}

\makeatletter
\def\blx@bblversion{3.1}
\makeatother
\usepackage[style=numeric-comp,sorting=none,arxiv=abs,eprint=true]{biblatex}
\addbibresource{refs.bib}

\AtEveryBibitem{%
  \ifboolexpr{
    test {\iffieldundef{doi}}
    and test {\iffieldundef{journal}}
  }{}
  {%
    \clearfield{eprint}
    \clearfield{archivePrefix}
    \clearfield{primaryClass}
    \clearfield{url}
    \clearfield{urldate}
  }
}

\usepackage{graphicx}
\usepackage[margin=1in]{geometry}
\usepackage[percent]{overpic} 
\usepackage{tikz}
\usepackage{authblk}
\usepackage{orcidlink}

\usepackage{xcolor}
\newtheorem{theorem}{Theorem}[section]

\newtheorem{definition}[theorem]{Definition}

\newtheorem{corollary}[theorem]{Corollary}
\newtheorem{conjecture}[theorem]{Conjecture}

\title{Heuristic Quantum Advantage with Peaked Circuits}

\author[1]{Hrant Gharibyan} 
\author[1]{Mohammed Zuhair Mullath} 
\author[1]{Nicholas E. Sherman} 
\author[1]{Vincent P. Su} \author[1]{Hayk Tepanyan}
\author[2,3]{Yuxuan Zhang\orcidlink{0000-0001-5477-8924}} 
\affil[1]{BlueQubit

San Francisco, CA 94105, USA}
\affil[2]{Department of Physics, University of Toronto

Toronto, ON M5S 1A7, Canada}
\affil[3]{Vector Institute for Artificial Intelligence

Toronto, ON M5G 0C6, Canada}

\date{\today}

\begin{document}

\maketitle

\begin{abstract}
We design and demonstrate heuristic quantum advantage with peaked circuits (HQAP circuits) on Quantinuum's System Model H2 quantum processor. Through extensive experimentation with state-of-the-art classical simulation strategies, we identify a clear gap between classical and quantum runtimes. Our largest instance involves all-to-all connectivity with 2000 two-qubit gates, which H2 can produce the target peaked bitstring directly in under 2 hours. Our extrapolations from leading classical simulation techniques such as tensor networks with belief propagation and Pauli path simulators indicate the same instance would take years on exascale systems (Frontier, Summit), suggesting a potentially exponential separation. This work marks an important milestone toward verifiable quantum advantage, as well as providing a useful benchmarking protocol for current utility-scale quantum hardware. We sketch our protocol for designing these circuits and provide extensive numerical results leading to our extrapolation estimates. 
Separate from our constructed HQAP circuits, we prove hardness on a decision problem involving generic peaked circuits. When both the input and output bitstrings of a peaked circuit are unknown, determining whether the circuit is peaked constitutes a QCMA-complete problem, meaning the problem remains hard even for a quantum polynomial-time machine under commonly accepted complexity assumptions. Inspired by this observation, we propose an application of the peaked circuits as a potentially quantum-safe encryption scheme~\cite{chen2016report,kumar2020post,joseph2022transitioning,dam2023survey}.
We make our peaked circuits publicly available and invite the community to try additional methods to solve these circuits to see if this gap persists even with novel classical techniques.

\end{abstract}

\pagebreak
\tableofcontents

\section{Introduction}

Random circuit sampling (RCS)~\cite{Aaronson:2016guw,Bouland:2018bva,Hangleiter:2022ibu} has emerged as a primary benchmark for demonstrating quantum computational advantage, exemplified by Google's 2019 Sycamore experiment~\cite{Arute2019QuantumSupremacy} and later followed by USTC Zuchongzhi works~\cite{wu2021strong,zhu2022quantum}. However, recent large-scale demonstrations from Google's Willow~\cite{Morvan2024PhaseTransitions} and Quantinuum's H-2~\cite{PhysRevX.15.021052}, while likely beyond classical simulation, expose a fundamental limitation of random circuit sampling: verification through cross-entropy benchmarking becomes exponentially expensive at large scales, rendering direct validation impossible. 

We present a scalable method for demonstrating heuristic quantum advantage using peaked quantum circuits~\cite{aaronson_2024_peaked}, circuits engineered to produce specific output bitstrings with anomalously high probability compared to an truly random circuit but otherwise look random.
The verification protocol is straightforward: Alice constructs a peaked circuit knowing its peak bitstring, Bob executes the circuit with his quantum computer measuring in the computational basis, and Alice verifies by checking if Bob's output matches the known peak. No exponential classical computation is required.

Our contributions are threefold: (i) We develop scalable construction techniques for peaked circuits at larger qubit counts and depths using methods related to identity obfuscation~\cite{doi:10.1142/S0219749905001067,ji2009nonidentitycheckremainsqmacomplete,Mori:2023ceu}.  (ii) Through extensive benchmarking, we demonstrate that our circuits resist state-of-the-art classical simulation methods including matrix product states~\cite{ORUS2014117,RevModPhys.93.045003, PhysRevResearch.3.023073, Biamonte_2017}, tensor networks with belief propagation~\cite{PhysRevResearch.3.023073, 10.21468/SciPostPhys.15.6.222}, and Pauli path simulators~\cite{Rall2019PauliPropagation,Aharonov2023NoisyRCS,Schuster:2024jds,doi:10.1126/sciadv.adk4321,Begusic:2023owa,PRXQuantum.6.020302,Gharibyan:2025ldn,Rudolph:2025gyq}. (iii) We motivate this construction further with hardness results for a decision problem involving general peaked circuits. In particular, we prove that determining whether an unknown circuit is peaked constitutes a QCMA-complete problem. Note that this is separate from our definition of ``solving'' our HQAP circuits, where the input string is fixed to the all zeros state.

Peaked circuits broadly have mounting evidence for their hardness. 
Simulating random peaked circuits is hard in the average case, analogous to RCS, particularly for circuits generated via post-selection \cite{Zhang:2025tuf} and error correction codes~\cite{deshpande2025peaked}. Our theoretical results in this work further add to the literature on hardness when the quantum circuit is peaked from a specific input bitstring to a particular output bitstring.
However, our primary contribution lies in
demonstrating the empirical difficulty of classically simulating circuits that not only admit efficient construction, but can also be implemented on existing quantum hardware. The relationship between our practical construction and the broader ensemble of peaked circuits remains an open question. In the absence of this theoretical connection, we refrain from asserting strong classical intractability results, instead characterizing the observed quantum-classical separation as a \textit{heuristic} quantum advantage.

The rest of the paper is organized as follows. 
In Section 2, we summarize the empirical runtime gap between quantum and classical resources needed to solve our peaked circuits. 
In Section 3, we describe briefly the techniques used to construct these peaked circuits. 
In Section 4, we provide additional background and details on the classical methods we implement for comparison. 
In section 5, we study a potential practical application of peaked circuits as a post-quantum cryptographic primitive. 
We conclude in Section 6 and invite others to test other ways of finding the peak in Section 7.

\section{A Gap in Classical and Quantum Runtime for Solving Peaked Circuits}

We start with our main result, which is an empirical gap in the time it takes to solve peaked circuits that we have constructed. Though our methods can generalize to even larger circuits, we focus on what is achievable on existing hardware, in particular Quantinuum's H-2 processor~\cite{Moses_2023} with 56 qubits and any-to-any connectivity. We create peaked circuits with 56 qubits and two-qubit gate counts ranging from 200 up to 2000. The quantum run time is approximately a few hours, while classical methods fail to extract the correct answer at around 700 RZZ gate circuits with reasonable computational resources. We estimate the largest circuits would require millions of GPU hours.

From Ref.~\cite{aaronson_2024_peaked}, we take the following definition.

\noindent\textbf{Definition 1.1} (Peaked Circuit). \textit{Given $\delta \in (0,1]$, we call the unitary $C$ $\delta$-peaked if:}
\[
    \max_{s \in \{0,1\}^N} \, |\langle s | C | 0^N \rangle|^2 \geq \delta
\]
\textit{with a corresponding peak weight $\delta_\textbf{s} \equiv |\langle \textbf{s} | C | 0^N \rangle|^2$.} where $\textbf{s}$ is the peaked bitstring. To ``solve'' a peaked circuit means given $C$, e.g. a circuit description in terms of gates, one correctly identifies the peaked bitstring $\textbf{s}$. For sufficiently small systems, one can compute the full statevector and compute the maximal amplitude squared but this clearly does not scale well with system size. With access to a quantum computer, one can simply run the circuit an $O(1/\delta_\textbf{s})$ times to observe a signal. Of course, in the presence of noise, one may encounter an overhead in the required number of shots.

\begin{figure}[H]
  \centering

  \begin{minipage}[t]{0.5\textwidth}
    \vspace{0pt}\centering
    \begin{tikzpicture}
      \node[inner sep=0] (A) {\includegraphics[width=\linewidth,height=\linewidth,keepaspectratio]{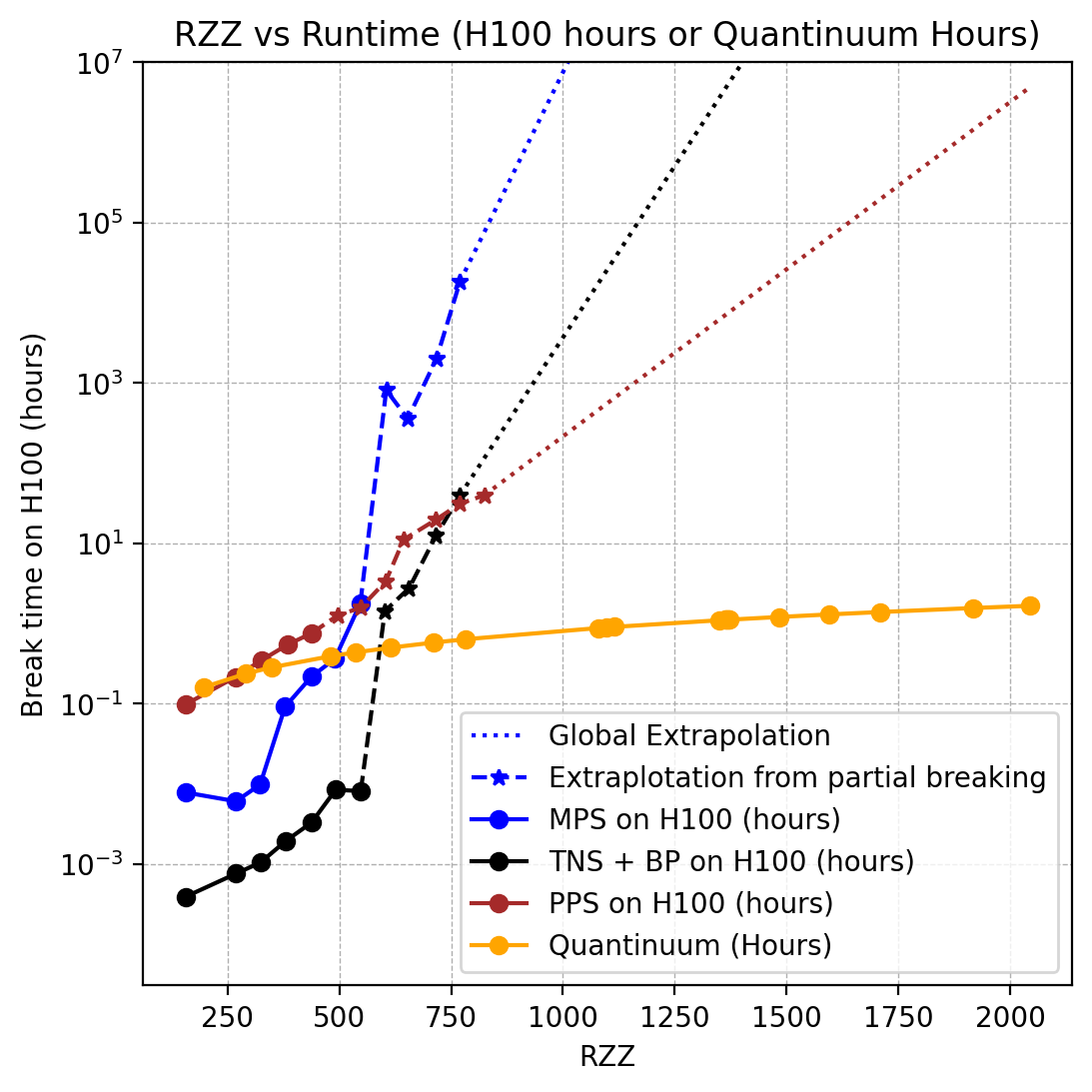}};
      \node[anchor=north west,xshift=-2pt,yshift=0pt] at (A.north west) {\small\bfseries a)};
    \end{tikzpicture}
  \end{minipage}\hfill
  \begin{minipage}[t]{0.5\textwidth}
    \vspace{0pt}\centering

    \begin{tikzpicture}
      \node[inner sep=0] (B) {\includegraphics[width=\linewidth,height=.5\linewidth,keepaspectratio]{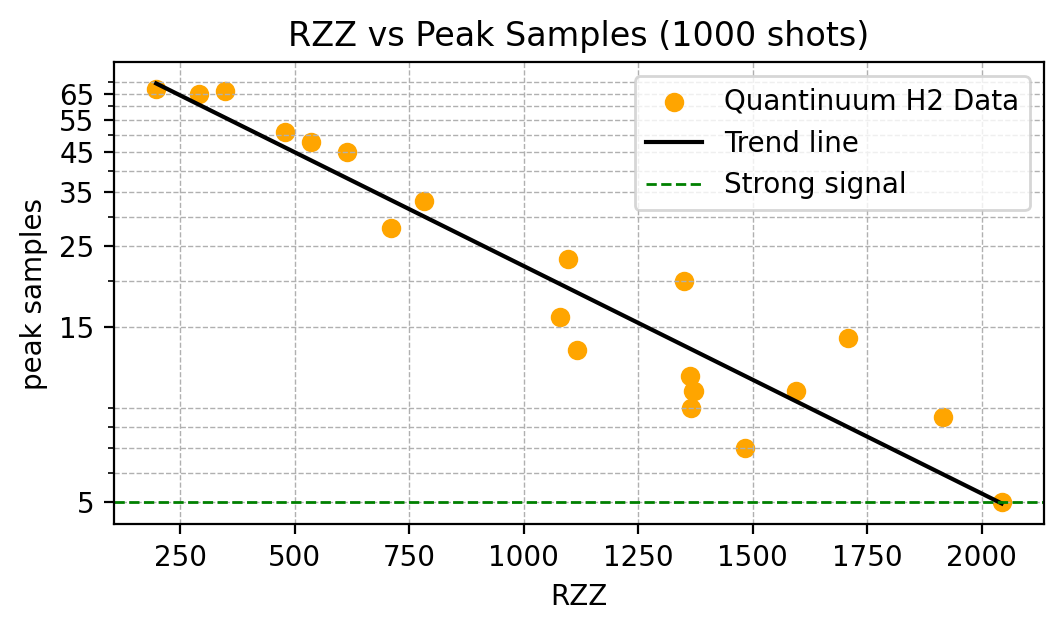}};
      \node[anchor=north west,xshift=-4pt,yshift=2pt] at (B.north west) {\small\bfseries b)};
    \end{tikzpicture}


    \begin{tikzpicture}
      \node[inner sep=0] (C) {\includegraphics[width=\linewidth,height=.5\linewidth,keepaspectratio]{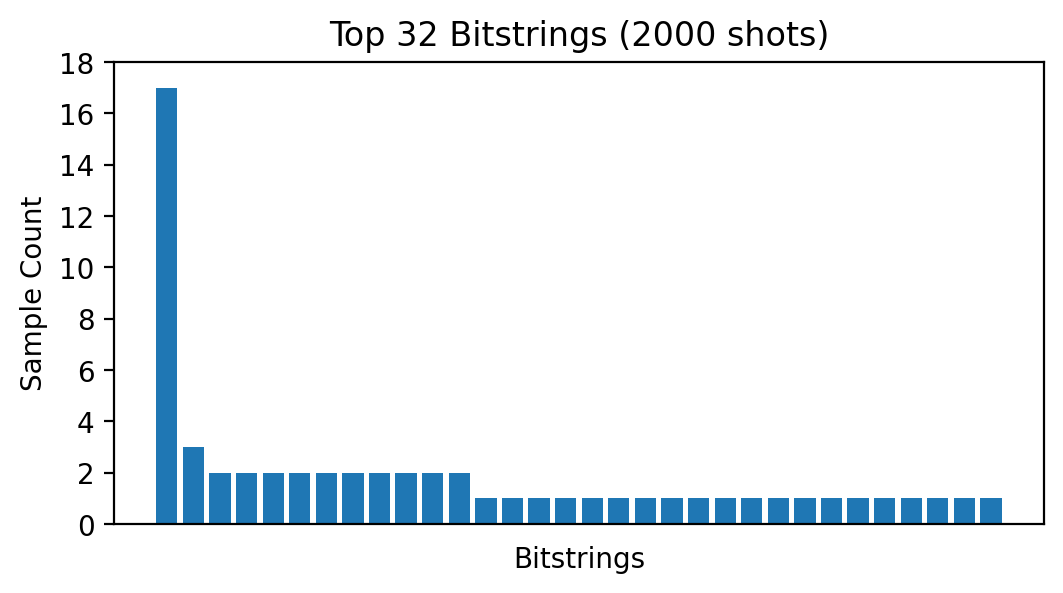}};
      \node[anchor=north west,xshift=-4pt,yshift=2pt] at (C.north west) {\small\bfseries c)};
    \end{tikzpicture}
  \end{minipage}

  \caption{\textbf{Gap in Classical and Quantum Runtimes for Solving 56 qubit HQAP Circuits}. \textbf{a)} We compare classical techniques (MPS, PPS and TNS+BP) with quantum techniques for solving 56 qubit, 10\% peaked circuits with all-to-all connectivity. As number of RZZs increases, classical methods require estimated millions of Nvidia H100 GPU hours, while the quantum device requires less than 2 hours. More details on the classical methods can be found in section \ref{sec:classical_methods} \textbf{b)} The performance of the quantum device drops with the growing number of RZZs in the HQAP circuit as expected, with still strong signal of 5/1000 (correct) peak samples for a 2044 RZZ circuit. \textbf{c)} Actual results from a single run on Quantinuum H2 for a 1917 RZZ peaked circuit, with 17/2000 correct peaked bitstrings. To our knowledge no other device in the world can find the peak for HQAP circuits of this scale and connectivity. }
  \label{fig:runtimes}
\end{figure}

The main technical contribution of this paper is to create trainable circuits which go beyond the qubit count and depth of the results trained in Aaronson and Zhang~\cite{aaronson_2024_peaked}. Additionally, while the method they found had $\delta_\textbf{s}$ which seemed nearly universal based on their training strategy, we have greater flexibility in tuning the peak weight. We call these HQAP (Heuristic Quantum Advantage Peaked) circuits as they demonstrate a large heuristic gap between classical and quantum runtimes for solving them.
More on the protocol of how we generate HQAP circuits can be found in Section~\ref{sec:peaked_protocol}.

The size of the peak presents an interesting trade-off. Increasing $\delta_\textbf{s}$ means fewer shots would be required, reducing the quantum runtime. On the other hand, one may worry that tuning it to be too large may lead to classically exploitable structure in the circuit description. This is an open question, and we invite curious researchers to participate in our open challenge (see Section~\ref{sec:open-peaked-challenge}). Without that exploitable structure, one is forced to deal with generic approximate classical simulation methods. Some of the leading ones include Matrix Product State (MPS) simulators \cite{SCHOLLWOCK201196}, tensor network with belief propagation (TN + BP) \cite{PhysRevResearch.3.023073, 10.21468/SciPostPhys.15.6.222}, and Pauli path simulators (PPS)~\cite{Rall2019PauliPropagation,Aharonov2023NoisyRCS,Schuster:2024jds,doi:10.1126/sciadv.adk4321,Begusic:2023owa,PRXQuantum.6.020302,Gharibyan:2025ldn,Rudolph:2025gyq}. For more details on our implementations and attack strategies, see Section~\ref{sec:classical_methods}.

With the explicit goal of constructing the largest possible separation in classical and quantum runtimes, we construct HQAP circuits that are designed to fit within the tolerance of existing quantum hardware and find an extreme gap. The largest such instance features 56 qubits, 2044 two-qubit gates, all-to-all connectivity and a peak weight $\delta_\textbf{s} \approx 0.1$. With these parameters, Quantinuum's H2 device can consistently detect the peak in under 2 hours with 1000 shots, assuming an average speed of 3000 HQC/hour. 

On the classical side, we find that state-of-the-art simulation methods are only able to solve the circuits up to $\sim 700$ two-qubit gates within 10 hours. For HQAP circuits smaller than that, we find a nice extrapolation of runtime with two-qubit gate count. Assuming such larger computations were even feasible (e.g. with infinite memory and millions of CPUs or GPUs), our benchmarks estimate that solving the largest circuit we ran on Quantinuum would take years even on a Frontier scale supercomputer. More details on the classical methods used can be found in the section \ref{sec:classical_methods}. 

Again, though these circuits lack any theoretical guarantees, we claim that this is a form of \textit{heuristic} quantum advantage, where there is an efficient verification protocol for the quantum device, and as far as we know, classical methods fail to produce the same answer.

\section{HQAP Circuit Protocol}\label{sec:peaked_protocol}

The protocol to design the peaked circuits in this work is based on the construction discussed in Ref. \cite{aaronson_2024_peaked}. In that work, peaked circuits are composed of two main parts, a random quantum circuit $R$, and a variationally parmaterized circuit $P$ that serves as a \textit{peaking layer}. Using gradient descent, the parameters of $P$ are trained so that the combined circuit, $R$ followed by $P$, results in a peak weight on a particular bitstring. When the depth of $P$ exceeds the depth of $R$, the peaking can be done trivially and to a peak weight of 1.\footnote{Assuming that the connectivity of $P$ matches that of $R^\dagger$} 
The intuition for using this structure is that $R$ mimics the structure of random quantum circuits, and thus may be hard to simulate classically, even if the final state is relatively simple to describe. This protocol has two main challenges when scaling to larger circuits. First, as the circuits become larger, the presence of barren plateaus~\cite{McClean_2018-barren} makes training difficult in general. Second, training the $P$ requires simulating the full circuit, which can become excessively expensive or even impossible in practice. These challenges limit the theoretical bound on the gap between classical and quantum runtimes possible.

Here, we improve upon this construction by extending both the depth and the width of such circuits, primarily enabled by tensor network methods. By utilizing generic tensor contraction functionality as in Quimb~\cite{Gray2018quimb,Gray:2020cah}, we can train $R$ and $P$ as in Ref.~\cite{aaronson_2024_peaked} for large qubit number as long as they are sufficiently shallow.  Such shallow peaked circuits are known to be easy to solve~\cite{bravyi2024classical}. To increase the depth, we introduce a large, obfuscated identity block between $R$ and $P$. In the worst case, it has been shown that checking whether a quantum circuit is equivalent to the identity is a QMA-complete problem~\cite{doi:10.1142/S0219749905001067}, even when the circuit depth is shallow~\cite{ji2009non}. This indicates that identity obfuscation could be a feasible direction for generating peaked circuits that cannot be simplified by an attacker.
In practice, to make the structure harder to reverse engineer, we employ a technique we call \textit{Tensor Patch Optimization}, which serves two purposes, patch sweeping and patch masking, which will be explained shortly. 
\paragraph{A word on notation.}
To enhance clarity in representing the sequential application of quantum operations, we introduce the left-to-right composition operator \( \triangleright \), defined by:
\[
U_1 \triangleright U_2 \triangleright U_3 := U_3 U_2 U_1
\]
That is, \( U_1 \) is applied first, followed by \( U_2 \), then \( U_3 \). This reverses the standard right-to-left ordering of matrix multiplication to improve readability when reasoning about temporal circuit structure.

\begin{figure}
    \centering
    \includegraphics[width=0.5\linewidth]{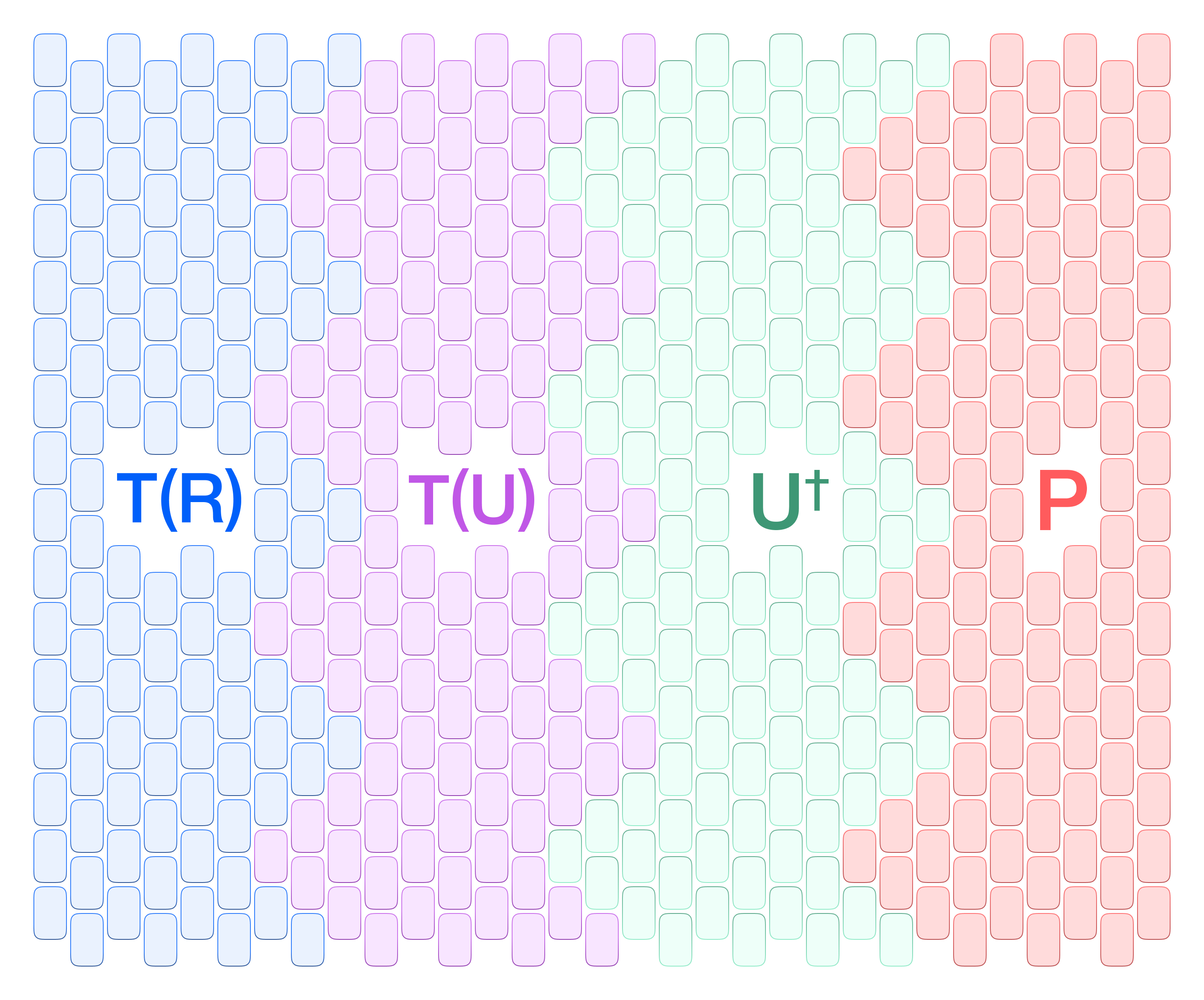}
    \caption{Visual schematic of HQAP building. We include details of the circuit manipulations used to scramble the circuit structure.}\label{fig:peak_protocol}
\end{figure}

Note that the identity block $U\triangleright U^{\dagger}$ implemented naively as gate-by-gate inversion is easy to detect, so we obfuscate this exact structure. The full protocol to generate peaked circuits is illustrated in Figure \ref{fig:peak_protocol}, and is summarized as
\begin{itemize}
    \item Train a shallow peaked circuit $R \triangleright P$ using the method in Ref. \cite{aaronson_2024_peaked}. 
    \item Insert an identity of the form $U\triangleright U^{\dagger}$ between $R$ and $P$.
    \item Apply sweeping to variational circuit parameters.
    \item Apply masking to modify the connectivity and gate structure.
    \item Apply swap transformations throughout the circuit.
\end{itemize}
The sweeping and masking, whose definition we will return to momentarily, can be done modularly throughout the circuit. However, the swapping must be coordinated globally to preserve the action of the unitary on the initial state. Let us denote the collective action of swapping, sweeping and masking as $T$, then the final circuit has the form $T[R] \triangleright T[U] \triangleright U^{\dagger} \triangleright P$. Note that the last few circuit deformation techniques can be applied in any order. Further details on the obfuscation strategies in the protocol are discussed in the following sections.

For our construction there exist, in principle, simplifications that may reduce the simulation requirement. With knowledge of how they were constructed and subsequently modified, one can trace out the simplifications. However, we expect that the series of changes are likely hard to reverse. We have internally tested a number of strategies for attacking these circuits, which led us to the current set of techniques we describe below. Abstractly, these can be thought of as methods to make transpilation or finding an efficient contraction path of the quantum circuit more difficult. We liken our scrambling techniques (swaps, sweeps, masks) to how hash functions build one-way transformations. Our expectation is that while its non-trivial to prove how our scrambling techniques make the final circuit irreversible to the original circuit, a lack of successful attempts to crack them will eventually build conviction that our scrambling is in practice irreversible. That would be analogous to how hash functions like SHA-1 and SHA-2 build irreversible 1-way protocols based on many small iterations. Similarly, SHA-1 and SHA-2 lack rigorous proof of irreversibility. However, the huge body of evidence shows that for all practical purposes they are hard to reverse.
For this same reason, we invite everyone to try to unscramble and break our HQAP circuits, see Section \ref{sec:open-peaked-challenge}.

\subsection{Tensor Patch Optimization}\label{sec:sweeping}
The idea of tensor patch optimization is to take a circuit and train a second variational circuit which matches the unitary action of the first. 
More procedurally, one can isolate subcircuits, or subsets of gates, and by finding an approximate substitute, locally deform them. A similar idea to this type of obfuscation was discussed in Ref.~\cite{Mori:2023ceu}, there called circuit unoptimization. These subcircuits are unitary and can generally be treated as tensors $\mathcal{T}_i$. The mathematical idea is to introduce a new trainable patch $\tilde{\mathcal{T}}(\theta)$ which is trained to mimic the original patch by taking the trace fidelity as a loss function 
\begin{equation}
    \mathcal{L}(\theta) = 1 -\frac{\left|{\rm Tr}[\mathcal{T}_i^{\dagger}\tilde{\mathcal{T}}(\theta)]\right|}{d},
\end{equation}
where $d$ is the dimension of the full unitary $\mathcal{T}_i$. Since $\mathcal{T}_i$ and $\tilde{\mathcal{T}}(\theta)$ are unitary, this loss function is always in the interval $[0, 1]$, and is $0$ if and only if $\tilde{\mathcal{T}}(\theta) = \mathcal{T}_i$ up to a global phase. 

Though mathematically similar problems, there are two ways we employ this technique. The first we refer to as angle \textbf{sweeping}, where the variational patch has the same gates as the original patch, but the parameters are first kicked away from their original values and then trained to increase the fidelity.
The second we call \textbf{masking}, where the trained patch actually has a different subcircuit structure, where the set of gates may be different or acting in a different order. Both offer a way to obfuscate structure that could potentially be used to simplify the circuit.

For the purposes of making the circuit harder to simulate, it can actually be beneficial to introduce a lossy approximation to the original patch. For example, when the loss function is small, but non-zero, one can no longer find an equivalence, thus making the transpiler's job harder.

\subsection{Swap Transformations}\label{sec:permutation}

The second obfuscation technique we employ is to apply a permutation enabled by a series of simple swap transformations between two qubit wires. Let \( \sigma^{(i \leftrightarrow j)} \) denote the unitary operation that swaps the quantum states of qubits \(i\) and \(j\). We define the action of this transformation on a circuit \(X\) as

\begin{equation}
    \sigma^{(i \leftrightarrow j)}[X] := \sigma^{(i \leftrightarrow j)\dagger} \, X_{\sigma} \, \sigma^{(i \leftrightarrow j)},
\end{equation}

where \( X_{\sigma} \) denotes the circuit \(X\) with its qubit indices relabeled under the transposition \( (i \leftrightarrow j) \). This can be interpreted either as a physical SWAP gate or as a virtual relabeling of the wire indices, depending on the context (e.g., compilation vs. simulation) 

These swaps can be interspersed throughout the circuit as long as one is careful about appropriately implementing the swaps and relabeling throughout. Such techniques make it difficult to design tensor network ansatzes that retain small bond dimension. Additionally, it is useful to note that the initial state is invariant under swaps, allowing all swaps at the beginning of the final transformed circuit to be ignored.

\section{Benchmarking Classical Approaches for Solving Peaked Circuits}\label{sec:classical_methods}

In the context of quantum advantage, it is increasingly understood that compelling demonstrations must not only exploit quantum computational power but also move beyond the reach of classical simulation. For sampling problems, this often means operating at a scale where even the most advanced classical techniques become infeasible due to computational or memory constraints. In particular, state-vector simulations, which scale exponentially in the number of qubits, are ruled out once the qubit count exceeds approximately 40. To test the classical hardness of our HQAP circuit constructions, we focus on scalable attack strategies that aim to approximately reproduce key features of the output distribution.

This section is broadly divided into three parts.
We start by briefly mentioning our primary attack strategy, looking at marginal or single qubit expectation values. This, in general, may not always converge to the correct answer, but is a general purpose approach that is simple to understand and can work when there is sufficient weight in the peak bitstring. In the case of examples we have run on quantum hardware, where we want the peak to be large enough to be detectable, this is a reasonable strategy. 

We subsequently explain the state-of-the-art classical methods used to implement this strategy. Details on simulation techniques are included. Because they fail to correctly output the target bitstring on moderately sized circuits with hundreds of two-qubit gates, we explain how our runtime extrapolations to the largest HQAP circuit sizes of $\sim2000$ two-qubit gates are calculated.

Importantly, these benchmarks of scaling generally serve as a lower bound on the time needed to use these classical methods as an attacker. Because we built the circuits to encode a particular bitstring, we are able to search for the minimal computational requirements needed to reproduce that answer. However, from the perspective of the attacker, these methods do not readily come with convergence guarantees, apart from doing full exact simulation. 

To give a sense of the classical computational resources used to reproduce these results, we spent thousands of A100 and H100 GPU hours and tens of thousands of CPU hours to solve the peaked circuits with the best known classical techniques. For statevector simulation we have been using qsim by Google on CPUs \cite{quantum_ai_team_and_collaborators_2020_4023103} and cuQuantum by Nvidia on GPUs \cite{Bayraktar2023cuQuantum}. For MPS, Tensor Network and Belief Propagation we used the quimb library on H100 GPUs \cite{Gray2018quimb,Gray:2020cah}. For Pauli path simulation we have been using a newly developed GPU implementation on H100 GPUs. All of the classical experiments have been orchestrated using the BlueQubit platform.

Finally, we explore solution strategies that exploit the structure of the circuit directly, bypassing the need for brute-force simulation. While our circuits do not yet guarantee security against all such attacks, we present  evidence of baseline structural resistance.

\subsection{Marginal attack strategies}\label{sec:z_attack}

Sampling from the output distribution of a general quantum circuit is widely believed to be classically intractable. 
In the average case, even approximate sampling to small total variation distance is believed to be classically intractable, and under plausible assumptions such as the non-collapse of the polynomial hierarchy, this forms the basis of most near-term quantum advantage proposals. For random peaked circuits, it has been shown that simulating the output distribution of peaked circuits to $1/\exp{\text {poly}(N)}$ precision is average case \#P hard; relaxing this precision to $1/\text {poly}(N)$, in the worst case the problem is still BQP-complete, and for any sequential simulator it is average case BQP-complete under plausible complexity assumptions~\cite{Zhang:2025tuf}. Similar hardness results have been observed in other types of peaked circuits~\cite{deshpande2025peaked}. Nevertheless, in this work there is a caveat one could leverage: our task is just to `identify the peaked strings', leaving room for spoofing. Although exact evaluation of expectation values of arbitrary circuits is computationally prohibitive, estimating approximate expectation values, particularly of low-weight observables or marginal distributions, is often significantly more tractable for classical algorithms, especially when the circuit depth is moderate or its structure is constrained, such as in circuits with shallow magic depth~\cite{zhang2025classical}.

The idea is relatively straightforward. If there is sufficient bias in the peak, then approximate expectation values, as long as they get the correct sign, can be used to attack or reconstruct what the peak bitstring is. Of course, as circuits get deeper, even getting approximate expectation values can be difficult. Indeed, for our largest circuits, our classical methods cannot solve them. Nonetheless, we introduce some of our notation that will be useful in describing their (lack of) success. 

For sufficiently large peak weight $\delta_\textbf{s}$, it is straightforward to observe that each of the bits will have marginal distributions $p(b_i)$ which are heavily biased towards the true value $s_i$. 
This is straightforwardly related to the $Z$ expectation value
\begin{equation}
    p(b_i=0) = \frac{1 + \langle Z_i \rangle}{2}.
\end{equation}
Thus, by coming up with estimates for $\langle Z_i\rangle$, e.g. with approximate tensor network methods, one can construct a candidate bitstring $\hat{s}$ by looking at the sign of each $\langle Z_i\rangle$. We refer to this as the marginal attack. From an attacker's perspective, this is not guaranteed to work. Indeed, one can manipulate the peak weight and the distribution so that marginals, even if calculated correctly, would give the wrong result. However, for the purposes of the classical benchmarks we present here, we have constructed the circuits so that the marginals are correctly lined up with the target bitstring. 

To keep track of the accuracy, one can consider the overlap (or inverse hamming distance) $O=N-|s-\hat{s}|_1$, or similarly, the fraction of correct bits $R=O/N$. Note that for a random guess one expects $R= 0.5$, and $R=1$ is only achieved if ${\textbf{s}} = \hat{\textbf{s}}$.

Interestingly, this attack strategy can also be used when looking at empirical or approximately generated samples, and corresponds to majority vote on each qubit.

For the largest circuits, our experiments are dominated by noise with minimal or no signal at all $R\lesssim 0.5$. For this reason, we study circuits of different sizes and compute $R$. We then define $T_{\rm break}$ as the minimum time needed to achieve $R=1$, or a quantity such as $\chi_{\rm break}$ defined as the smallest bond-dimension needed to achieve $R=1$ for MPS and TNS simulations. Since the runtime and accuracy for simulating a circuit is fully characterized by the bond-dimension $\chi$, there is a direct relationship between $t_{\rm break}$ and $\chi_{\rm break}$.

One can play a similar game with access to multi-qubit observables, though these in general will not be any easier to calculate than single qubit observables. When dealing with samples, whether approximate or empirical, one can implement many classical post-processing schemes. They may consider correlations or do majority votes on subsets of bits rather than single bits. However, finding an optimal strategy when there are conflicts is also very difficult.

\subsection{Matrix Product State Simulations}\label{sec:MPS_results}
MPS has revolutionized the study of 1D quantum systems \cite{SCHOLLWOCK201196}. Its power for capturing and studying entanglement has thus become a standard tool in the kit of many quantum practitioners. MPS allows a dial between efficiency and expressibility through a simple parameter, the bond dimension $\chi$.

\begin{figure}
    \centering
    \includegraphics[width=0.7\linewidth]{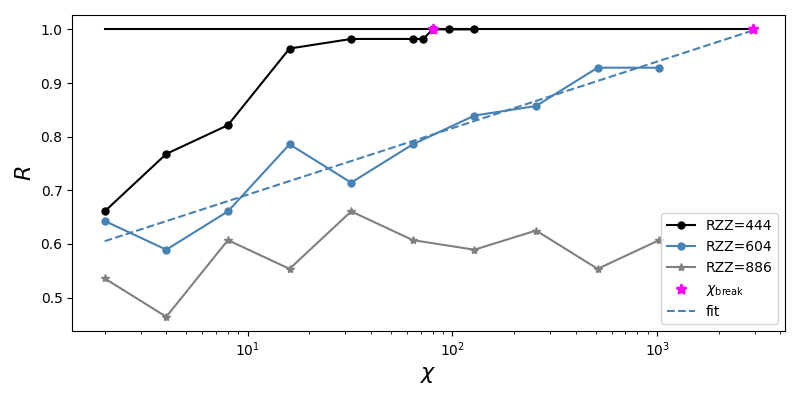}
    \caption{Estimating $\chi_{\rm break}$. By tracking the accuracy $R$ as a function of $\chi$, we find 3 cases. 
    The black curve is an example of a circuit where $\chi_{\rm break}$ is found exactly. For the blue curve, the $\chi_{\rm break}$ is not found exactly, and so we fit the data to find $\hat{R}(\chi)$ and extrapolate to estimate $\chi_{\rm break}$, and show the fit with a dashed line. Lastly, the gray curve is an example of a circuit our fitting protocol rejects when estimating $\chi_{\rm break}$. 
    \label{fig:mps_chi_scaling}}
\end{figure}

Our peaked circuits are explicitly designed with all-to-all connectivity. Thus, one would not expect MPS to be particularly well suited to simulate them. However, they serve as a baseline simulation method for understanding the classical difficulty. In particular because producing the correct the peak bitstring could be spoofed by a low rank MPS that is far from the true final state. One could worry that basic MPS simulation may crack the circuits. We find this naive method works up to a point. Beyond 600 two-qubit gates, however, MPS fails to predict the correct bitstring.

Because MPS works best for states with area law entanglement in 1D, the explicit transpilation of a circuit with long-range entangling gates to a chain layout can greatly affect the performance. In the worst case, a naive permutation of an otherwise local 1D chain can make MPS simulation struggle.
To define the ordering of the qubits, we look at the adjacency matrix for our quantum circuit and use the Reverse Cuthill-McKee (RCM) algorithm~\cite{RCM_graph_algo} to define a global permutation to our circuit. The RCM algorithm defines a reordering of an adjacency matrix to minimize the bandwidth of the matrix and hence the maximum distance between qubits that interact. 

\begin{figure}[htbp]
    \centering
    \includegraphics[width=\textwidth]{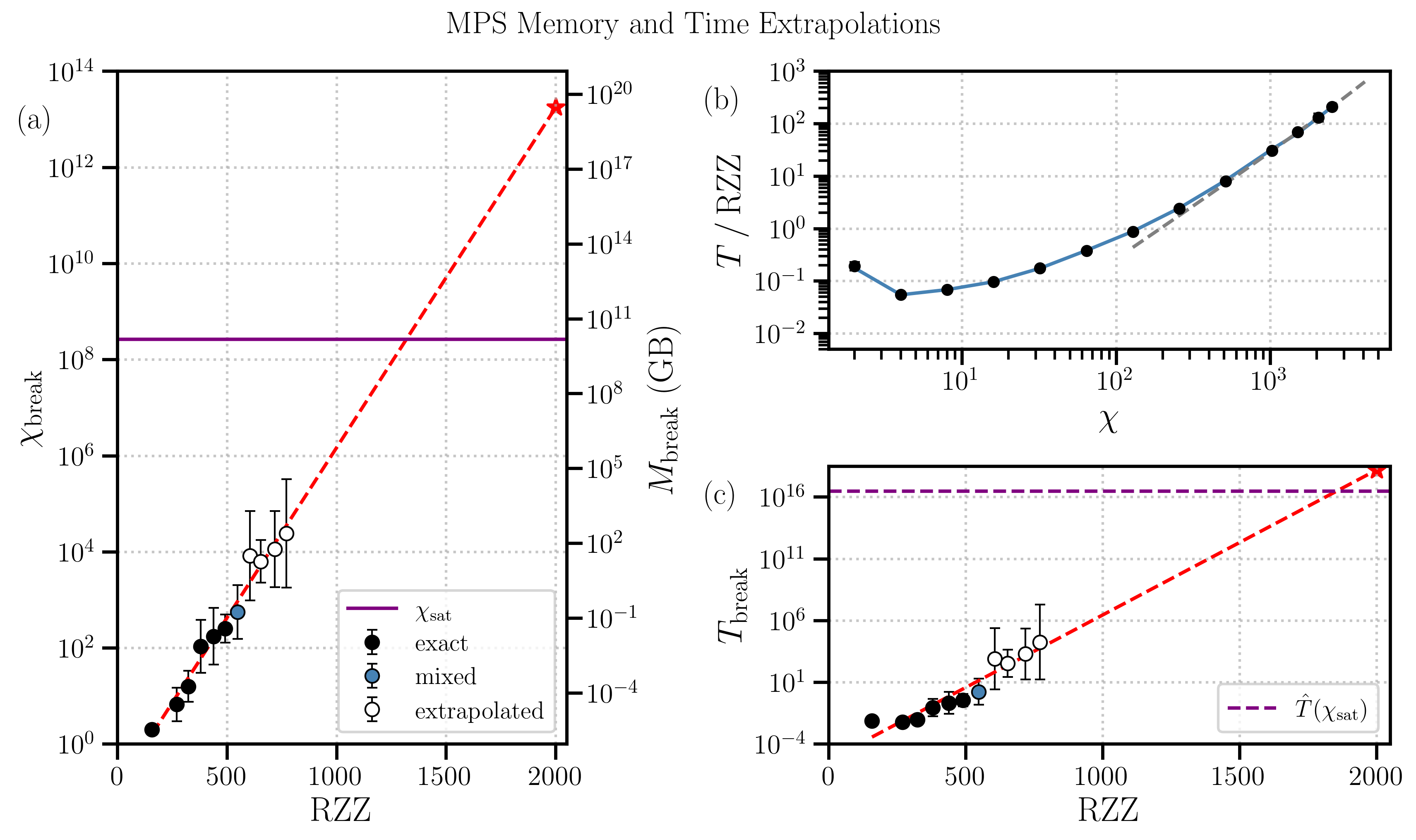}
    \caption{Benchmarking of bond dimension and time required to simulate peaked circuits of increasing depth with MPS. (a) Bond dimension vs two-qubit gate count. Black dots indicate MPS simulations that converged exactly, while white dots indicate circuit samples where the extrapolation procedure in Fig.~\ref{fig:mps_chi_scaling} was used. Blue dots indicate the transition where some circuits of that size were fully solved. The red dashed line shows the exponential fit, while the solid purple line shows the $\chi$ corresponding to exact simulation. (b) Gate simulation cost for increasing $\chi$. We study how the simulation time grows with increasing bond dimension, normalized by the number of gates. The large $\chi$ behavior is modeled by by a cubic function of $\chi$. (c) We plot the solution time $T_{\rm break}$ in hours as a function of circuit depth. Exact answers (black dots) can be read off, while white dots are extrapolated by using $\hat{\chi}_{\rm break}$ and the time conversion in (b). The red dashed line is an exponential fit for $T_{\rm break}$ compared with circuit depth. The purple dashed line is the estimate for the runtime of a depth 2000 circuit with $\chi_{\rm sat}$.}
    \label{fig:mps_combined_3panel}
\end{figure}

To produce extrapolation estimates for our largest circuits, we study the relationships between bond dimension, circuit depth, simulation time, and accuracy of the predicted answer. We define $\chi_{\rm break}$ as the minimum bond-dimension needed to yield the correct bitstring for a given circuit. For small circuits, this is found explicitly through simulation. For larger circuits, this can be extrapolated by looking at the trend of $R$, the correct fraction of bits, as a function of increasing $\chi$. 

For a given circuit, we perform MPS simulation with bond dimension increasing in powers of 2. If the circuit is cracked, we perform binary search to find $\chi_{\rm break}$. The maximum $\chi$ used in our experiments varies by circuit with $\chi=5000$ the largest bond-dimension used. If the circuit remains uncracked for all $\chi$ studied, we perform an automated fitting procedure to determine $\hat{R}(\chi)$. We assume a linear fit in $\log\chi$, weighted more heavily towards large values of $\chi$ and reject fits that yield a correlation coefficient of $r^2 < 0.8$, or those that show a negative slope. We show examples of $R(\chi)$ for three circuits, one which was cracked exactly, one which we successfully extrapolate $R(\chi_{\rm break})$, and one which we reject.

Our peaked circuit construction allows us to control the RZZ gate count. However, there is some variability in the resulting $\chi_{\rm break}$ due to the stochastic nature of the training. Thus, for a given circuit depth, we apply this extrapolation procedure to five instantiations, reporting both the average $\chi_{\rm break}$ as well as the standard deviation in the log. For accessible system sizes, we observe exponential scaling in the necessary bond dimension as a function of circuit depth, as might be expected for random circuits before the bond dimension saturates at its maximal value $\chi_{\rm sat}=2^{N/2} \simeq 10^8$~\cite{SCHOLLWOCK201196}. See Fig.~\ref{fig:mps_combined_3panel}. Interestingly, our extrapolated answer for $\chi_{\rm break}$ on our deepest circuits reaches beyond this $\chi_{\text max}$, giving evidence for the difficulty of MPS to crack our system sizes.

Ultimately, one cares not only about the memory requirements determined by $\chi$ but also the total runtime of the classical method. To this end, we also study the time $T$ to simulate these circuits as a function of $\chi$, normalized by the circuit depth. The runtime is dominated by the operations of applying a gate which scales as $\chi^2$, and the subsequent SVD which scales as $\chi^3$. We use this fact to fit our data to a functional form $T/RZZ \sim a\chi^2 + b\chi^3$, and show the fit in Figure \ref{fig:mps_combined_3panel}. 
Combined, these give a way to extrapolate the total runtime needed for our deepest circuits. Assuming one even has access to enough memory to run the simulation, we can use this relation to estimate $T_{\text break}$ based on $\chi_{\text sat}$ for our largest circuits.

\subsection{Tensor Network State and Belief Propagation Simulations}\label{sec:TNS_results}

Despite the success of MPS simulations in studying quantum dynamics in one-dimensional systems with local interactions, systems with higher-dimensions and long-range interactions have still remained a major hurdle in the field. A common approach has been to simply map such systems to quasi-one-dimensional systems at the cost of long-range interactions, as discussed in Section \ref{sec:MPS_results}. Other tensor network methods such as projected entangled-pair states (PEPS) \cite{verstraete2004renormalizationalgorithmsquantummanybody, RevModPhys.93.045003} have been popular for studying the many-body dynamics of two-dimensional lattices, such as for a square lattice, and easily extend to arbitrary geometries. However, extracting useful quantities, such as local observables, from the final state remains difficult in general due to the complexity of the resulting tensor network contraction. The complexity of local measurements disappears in MPS systems by using the gauge freedom of the MPS ansatz to bring an MPS state into a canonical form \cite{SCHOLLWOCK201196}. Once in this form, local observables can be computed by a tensor contraction involving $\mathcal{O}(1)$ site tensors rather than a full network contraction.

Recently, belief propagation (BP) has been proposed as a method to gauge a generic Tensor Network State (TNS) to address this problem \cite{PhysRevResearch.3.023073, 10.21468/SciPostPhys.15.6.222}. BP is a set of equations for \textit{message tensors} that can be solved self-consistently, whose fixed point defines a gauge transformation that brings the TNS into the Vidal gauge \cite{PhysRevLett.91.147902, PhysRevLett.93.040502, PhysRevLett.98.070201,PhysRevB.78.155117}. This procedure is known to be exact for tensor networks on a tree graph, and while approximate for loopy graphs, often leads to accurate results in practice.

Despite the benefits of TNS evolution and BP, this framework requires performing matrix decompositions on tensors, like QR or SVD, that scale as $\chi^{z}$ for geometries with a coordination number $z$. For two-dimensional systems with $z \simeq 4$, this scaling is manageable enough to lead to dramatic improvements in both accuracy and efficiency over traditional MPS-based simulations. However, for circuits with all-to-all connectivity, like the circuits studied here, the scaling becomes infeasible as $z\rightarrow N$ for a naive use of this method.

To remedy this, we define a simpler topology with a reduced coordination number, and then transpile the circuit to this simpler topology. This can be thought of as a generalization of what is done for MPS-based simulations, where the simpler topology is a chain. Once transpiled, we then perform the simulation using a generic TNS that matches the topology of the resulting circuit. With this increased flexibility comes the difficulty of picking a particular layout. There is a tradeoff between a simpler topology with a reduced $z$, allowing for more efficient tensor operations, and the increase in gate count from the added swap gates to perform the routing. 

The limitless options for both topology and bond dimension of the ansatz are further complicated by the use of BP at the end to compute the one-qubit marginals. For example, when using a $k$-regular graph with $3 \le k \le 6$, we found surprisingly that increasing $\chi$ for TNS states can actually lead to \textit{lower} accuracy with respect to the true expectation values. This is in sharp contrast with the MPS case where performance is expected to be monotonic in $\chi$. We believe this is due to the increase in long-range and loopy correlations throughout circuit simulation, which can spoil the accuracy of BP for computing the marginals \cite{PhysRevResearch.3.023073, 10.21468/SciPostPhys.15.6.222}. 

\begin{figure}
    \centering
    \includegraphics[width=0.55\linewidth]{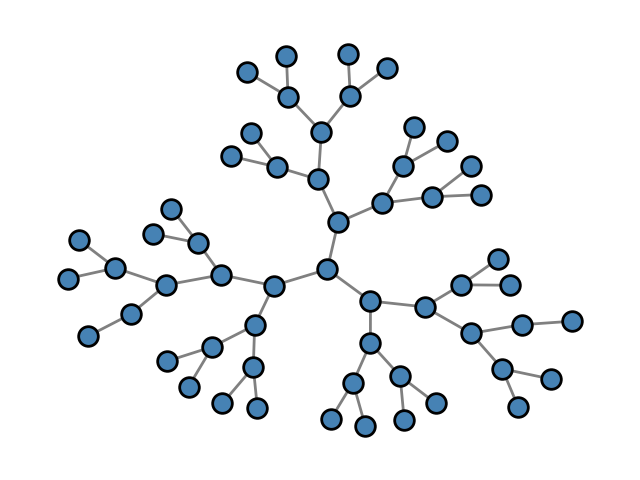}
    \caption{Example of the tree-like graph that we transpile our all-to-all circuits to before performing TNS simulations. The lack of loops lead to confidence in the convergence of BP for computing marginals in the tensor network state.}
    \label{fig:tns_tree}
\end{figure}

Given the stark effect of topology on the quality of the final answer, we found the most successful results using a tree tensor network. Notably, the lack of loops in the tree network satisfy a core assumption of BP. This led to the desirable property that increasing bond dimension did lead to more accurate marginals. In the tradeoff of gate transpilation and topology, we found reasonable results using a graph with three binary trees stemming from the root. All reported TNS simulations we benchmark use this graph. See Figure \ref{fig:tns_tree} for an illustration.

We perform the transpilation using the Qiskit package \cite{qiskit_paper}, and the gate evolution was done using the simple update algorithm \cite{PhysRevLett.101.090603, PhysRevB.99.195105} with the 'reduced tensor' method~\cite{10.21468/SciPostPhys.15.6.222}. Once we have an approximate TNS representing the final state of the circuit, we perform BP to bring the TNS into an approximate canonical form, allowing us to efficiently approximate $\langle Z_q\rangle$ for each qubit $q$. 

\begin{figure}[htbp]
    \centering
    \includegraphics[width=\textwidth]{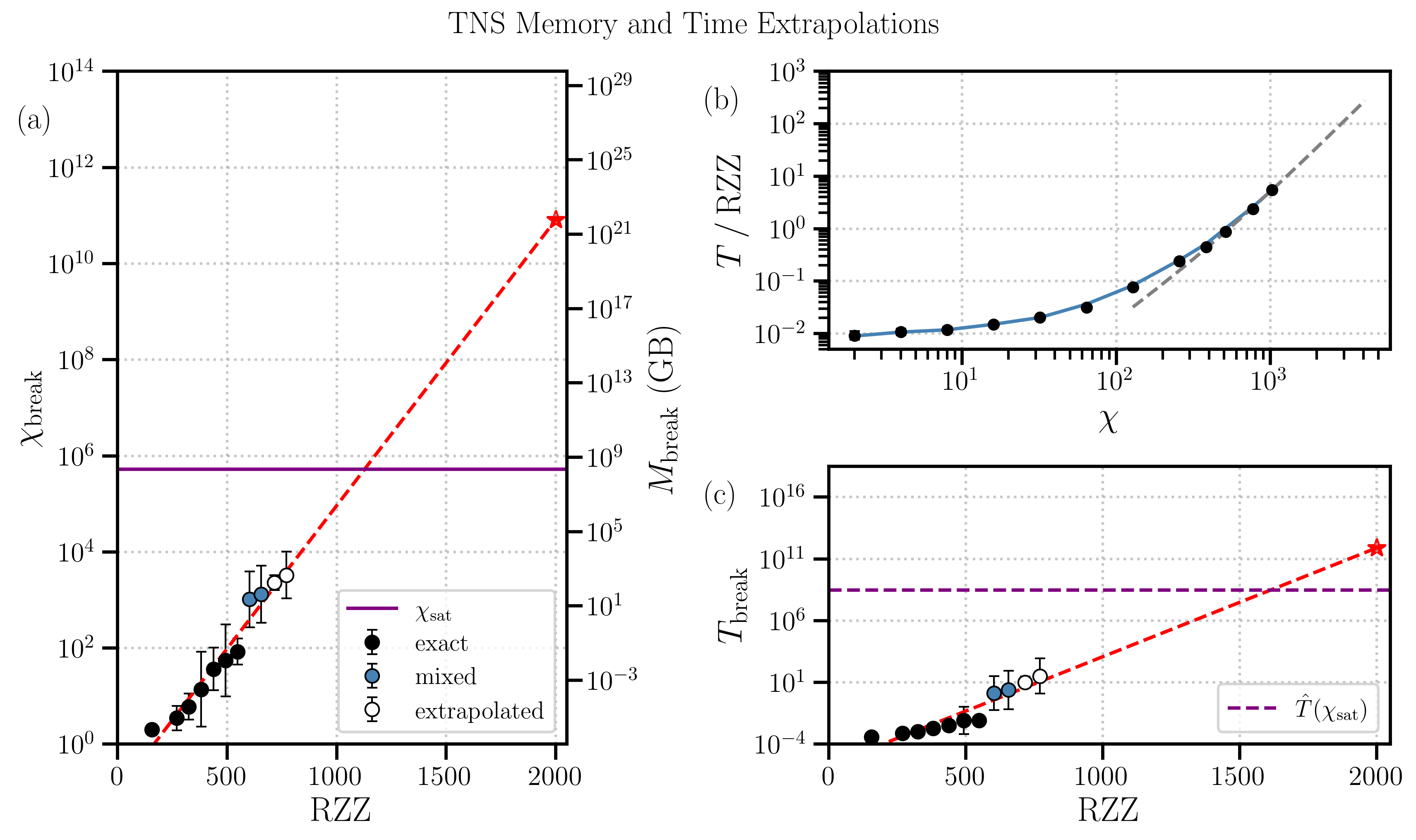}

    \caption{Memory and time extrapolations for TNS simulations. Notably, the TNS can crack circuits much more efficiently than in the case of MPS. However, we still find an exponential trend in the memory and time requirements. In the case of the network comprised of 3 binary trees, $\chi_{\rm sat}$ is determined by the size of the largest binary tree in Fig.~\ref{fig:tns_tree}. Though the $\chi_{\rm sat}$ is lower than in the MPS case, the required memory is comparable. Using the extrapolations with $\chi_{rm sat}$, we estimate that $T_{\rm break}$ for TNS simulations would take about $3 \times 10^8$ hours for a 2000 RZZ circuit, assuming the calculation would even fit in memory.}
    \label{fig:tns_combined_3panel}
\end{figure}

We then apply the marginal attack strategy discussed in Section \ref{sec:z_attack}. We also considered a greedy, iterative approach to generate a proposed bitstring $\hat{\textbf{s}}$. Rather than computing all marginals independently, one first selects the most biased marginal, and then fixes the tensor network based on the most likely bit. After fixing that bit, we re-run BP and compute the remaining marginals, conditioned on the outcome of the previous bits, and continue this process until all bits are fixed, defining $\hat{\textbf{s}}$. We then find $R$ of correct bits for both procedures, and use whichever method yielded a more accurate proposal.

Analogous to our MPS simulations, $\chi_{\rm break}$ is determined by $R(\chi_{\rm break}) = 1$. For sufficiently small circuits, we can compute $\chi_{\rm break}$ exactly, while for larger circuits we rely on fitting $\hat{R}(\chi)$ and extrapolate to approximate $\chi_{\rm break}$. In Figure \ref{fig:tns_combined_3panel}, we show $\chi_{\rm break}$ for peaked circuits with $N=56$ and a range of RZZ gates. 

Lastly, we create a conversion from $\chi_{\rm break} \rightarrow T_{\rm break}$ in the same way as we did for the MPS simulations described in Section \ref{sec:MPS_results}. The runtime is dominated by the operations of applying a gate which scales as $\chi^2$, and the subsequent SVD which scales as $\chi^3$. We use this fact to fit our data to a functional form $T/RZZ \sim a\chi^2 + b\chi^3$. This fit allows us to extrapolate for larger $\chi$ to estimate the runtime for larger circuits.
We note that from our fit of $T(\chi)$, the large $\chi$ scaling is only present for $\chi \gtrsim 256$. This leads to exponential scaling for $T_{\rm break} (RZZ)$ to only be visible for $RZZ \gtrsim 550$. Due to limited data in this regime, the estimated $T_{\rm break}$ should be treated as a rough estimate.

\subsection{Pauli Path Simulations}
Pauli Path Simulation (PPS) is a relatively new method to classicaly compute observable expectations through backward evolution of the circuit gates on the Pauli basis \cite{Rall2019PauliPropagation, Aharonov2023NoisyRCS, Schuster:2024jds,doi:10.1126/sciadv.adk4321,Begusic:2023owa,PRXQuantum.6.020302,Gharibyan:2025ldn,Rudolph:2025gyq}. We leverage a state of the art GPU-accelerated version of this to implement the marginal, or $\langle Z\rangle$, attack.

A counterintuitive finding of Ref.~\cite{Gharibyan:2025ldn}, is that the convergence patterns of observables as a function of coefficient  truncation parameter $\Delta$ is highly unpredictable. In the context of solving peaked circuits, this makes it difficult to predict, even for one $\langle Z_i \rangle$, what value of $\Delta$ will be sufficient to get an accurate expectation value. While intuitively one might expect smaller $\Delta$ should yield a more accurate answer, the deviations are non-monotonic, and thus a smaller $\Delta$ may actually be less accurate than a larger $\Delta$. 

The way we then choose to extrapolate the difficulty is by looking for convergence of the expectation value of each $\langle Z_i \rangle$ \textit{to the correct sign} and, hence, bit value. This assumption is then a \textit{lower bound} on the time it would take to crack with this strategy since it equips the attacker with the most information possible. As in the MPS case, we are using our knowledge of the true answer for extrapolation purposes. A priori, without the ground truth, confidence in the convergence can be very difficult to establish~\cite{Gharibyan:2025ldn}.

To be slightly more precise, we compute $\hat{\langle Z_i}\rangle$ for each qubit with geometrically decreasing truncation parameters $\Delta_j$. We check that i) the sign has converged for 3 successive $\Delta_j$ and is the correct value and ii) that the expectation value exceeds a certain fraction of the operator weight that survives the truncation. 

An interesting feature we notice when implementing this strategy is that there is generically a large gap between qubits that are easy to crack and those that are difficult. As can be seen from Figure \ref{fig:pps_vs_time} some qubits take more time to converge and crack, but the overall hardness grows exponentially with the number of 2-qubit gates in the circuit, as expected.

\begin{figure}[H]
  \centering
  \includegraphics[width=1.0\textwidth]{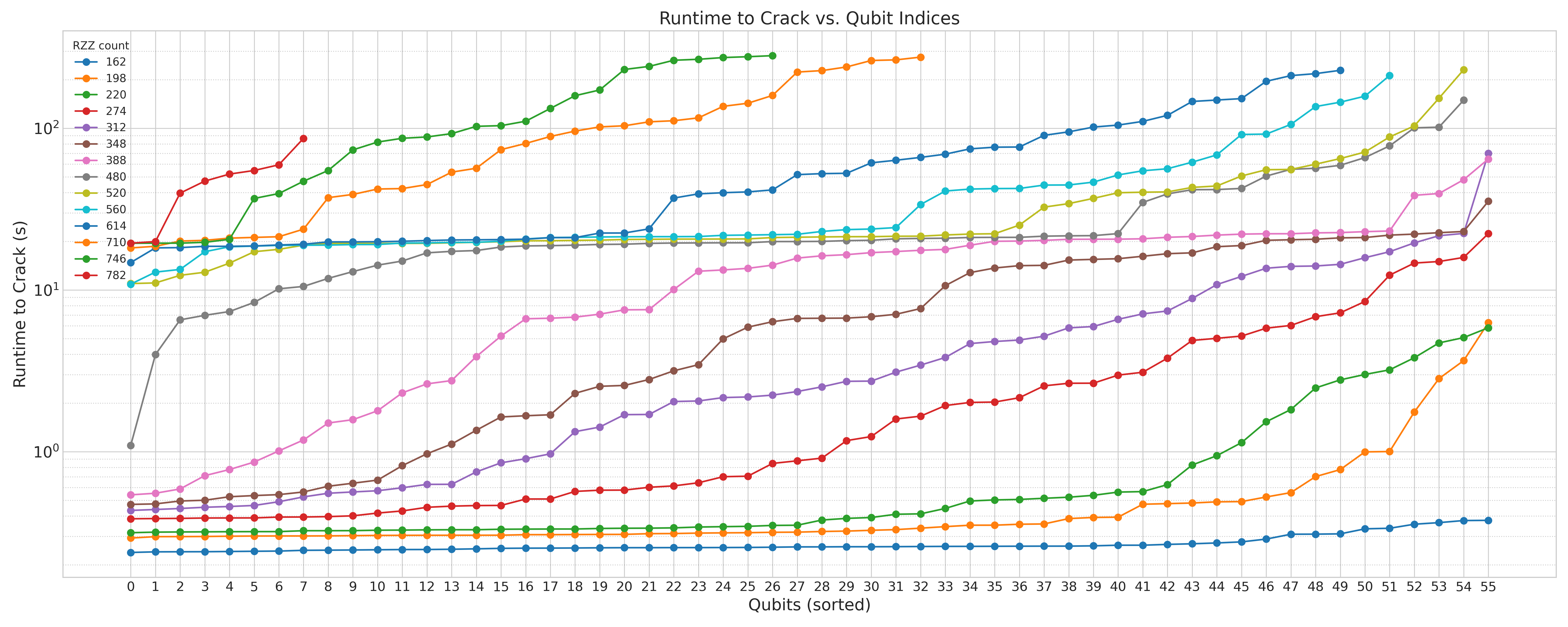}
  \caption{Required runtimes to crack individual bits of the peak bistring for 56-qubit HQAP circuit with varying RZZs going from 162 to 782 with PPS (standard coefficient truncation) on an H100 GPU. Qubit indices are ordered by difficulty (runtime to crack). We vary the truncation parameter $\Delta$ from $3.2 \times 10^{-3}$ to $2.5 \times 10^{-5}$ in powers of 2. }
  \label{fig:pps_vs_time}
\end{figure}

\begin{figure}[H]
  \centering
  \includegraphics[width=0.75\textwidth]{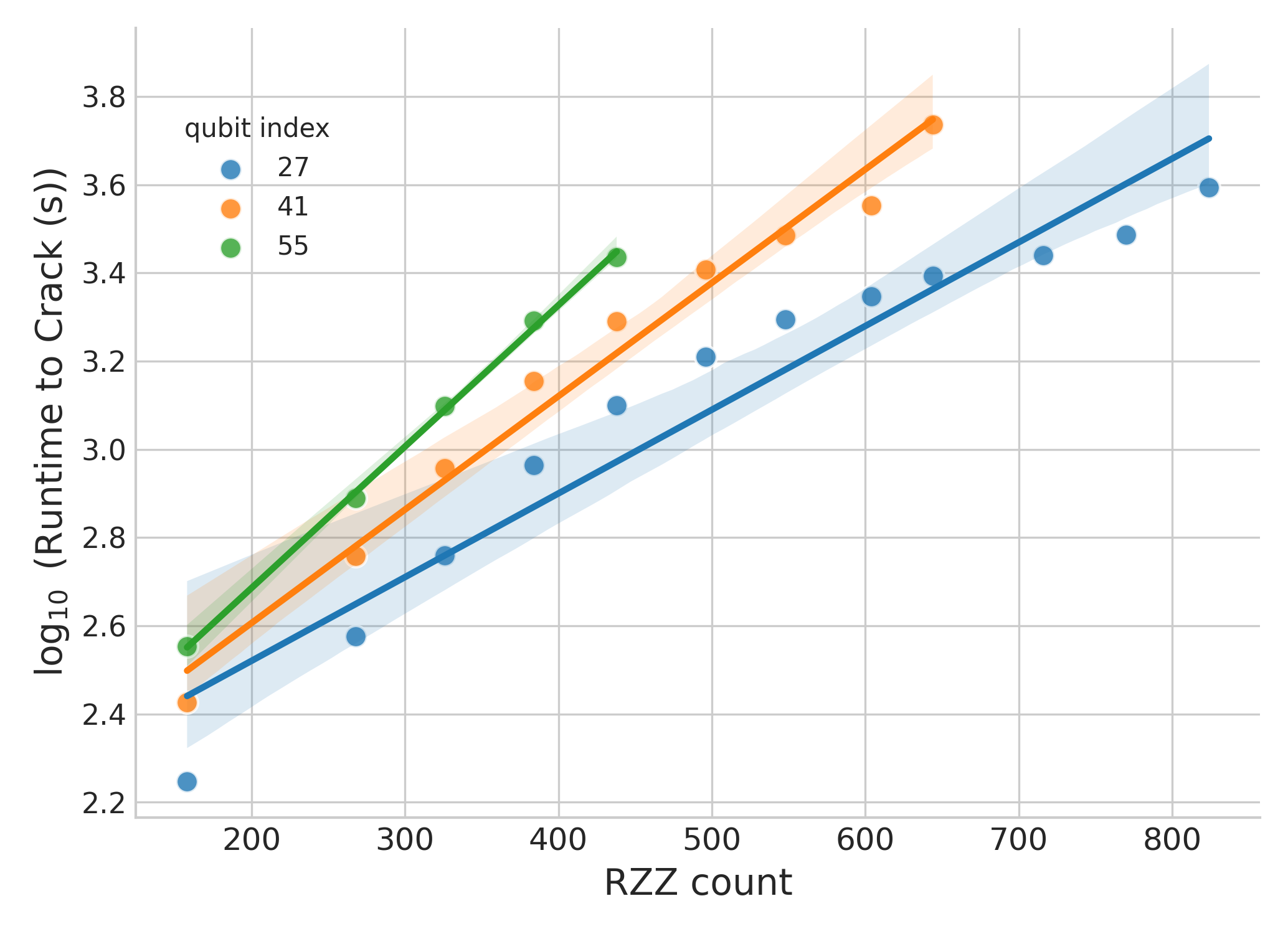}
  \caption{Plot showing exponential growth in runtime to crack bits of different difficulty levels for circuits with up to 824 RZZ gates. For this plot, we used the relative truncation method described in this section, and we tracked up to 1 billion Pauli terms, using $\sim$96 GB of memory on an H100 GPU. We note that solving qubits with index 55 amounts to fully solving the peaked circuit.}
  \label{fig:pps_runtime_pivot_plot}
\end{figure}

While PPS is very effective in shallower regimes and we can crack all 56 bits up to 388 RZZ HQAP circuit, memory very quickly becomes a constraint. Consequently, it was not feasible to explore truncation thresholds lower than $10^{-5}$ for these circuits, as the number of Paulis approached the memory limit ($\sim$1 billion Paulis). This is why the HQAP circuits after 388 RZZ have higher qubits missing from the plot in Figure \ref{fig:pps_vs_time}.

One of the issues we encountered while running PPS with standard coefficient truncation for larger peaked circuits (with RZZ-count $ > 600$) was that all the Pauli terms would get truncated midway through the simulation. The reason for this is the following: there are long stretches in the circuit where there is extensive branching and negligible merging of Paulis, resulting in rapid decay of all Pauli coefficients. 

We briefly outline a modification to the coefficient truncation method devised to mitigate this issue and (potentially) yield a non-zero expectation value at the end of each PPS run. The modification, which we call \textit{relative truncation}, is rather simple: after the $k$-th gate application, we drop only those Pauli strings $P$ whose coefficients $c_P$ satisfy $$|c_P| \leq \|O_k\|\Delta, $$ where $\|O_k\|$ is the normalized Frobenius norm of the evolved Pauli-sum after $k$ steps of the simulation. This simple modification dynamically lowers the effective truncation threshold from $\Delta$ to $\|O_k\|\Delta$, and when more and more Pauli terms are dropped, resulting in a decline in $\|O_k\|$, we end up truncating fewer Pauli terms than we would otherwise have. 

We also noticed an improvement in accuracy for the computed expectation values with this method, even after accounting for the fact that the new method utilizes more Pauli terms for a given threshold $\Delta$ compared to standard truncation. 

\begin{figure}[H]
  \centering
  \includegraphics[width=0.75\textwidth]{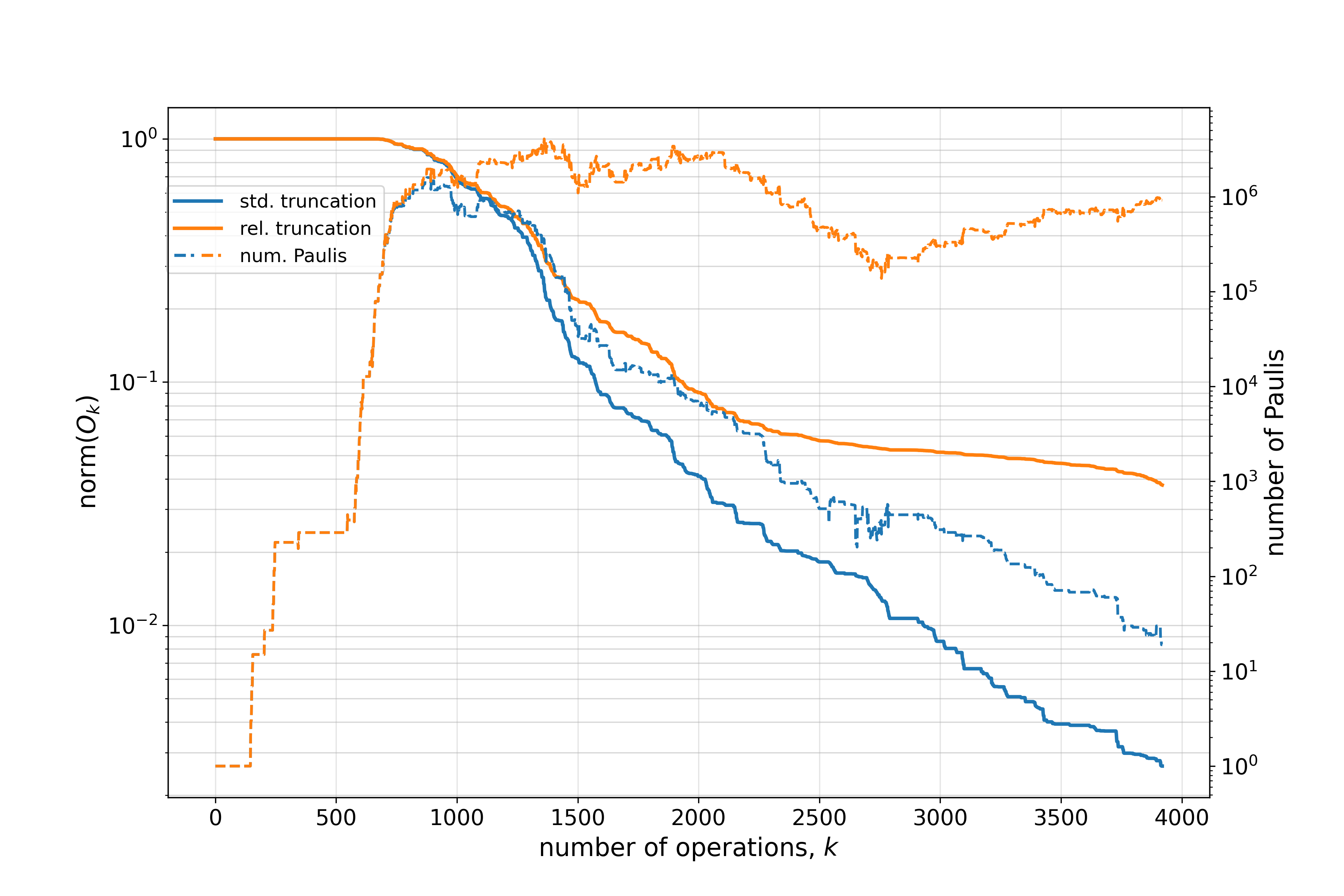}
  \caption{Comparison of change in observable norm $\|O_k\|$ and number of Pauli terms during the computation of $\langle Z_0 \rangle$ for both standard and relative coefficient truncation schemes for a circuit with 536 RZZ gates, and for threshold $\Delta = 2\times10^{-4}$. With standard coefficient truncation, we see that the number of Paulis decline rapidly in the latter half of the simulation, with only a couple of dozen terms remaining towards the end, whereas with relative truncation there are still around a million Pauli terms retained at the end, thereby improving the accuracy of the simulation.}
  \label{fig:pps_std_vs_rel_truncation}
\end{figure}

\subsection{Attacking the Circuit Structure}

We have examined the computational requirements for direct classical simulation of the circuit. While this provides a useful baseline, we recognize that targeted attacks exploiting specific circuit structures could potentially reduce these computational requirements. We present an initial analysis of several such attack vectors and discuss how the techniques introduced in Section 2 - swapping, sweeping, and masking - may offer some protection against them.

To illustrate these concepts, we analyze a simplified test case: a circuit constructed as $U \triangleright U^{\dagger}$ (a random unitary circuit followed by its adjoint). This allows us to verify that our methods pass a baseline level of resistance to these attacks, though we emphasize that making stronger guarantees on the difficulty remains an open question.

\textbf{Correlating angles}
For a random circuit that is comprised of multiple variational gates, the angles of gates in $U^\dagger$ will mirror those in the first half of the circuit. Thus, even though swaps were employed throughout, this would effectively allow someone to reverse engineer the resulting permutation. However, with tensor patch optimization used for the purpose of angle sweeping, we find that with a sufficient number of sweeps, one can hide the apparent correlations between different parts of the circuit. See Fig.~\ref{fig:sweep_angles}.

\begin{figure}[H]
  \centering
  \includegraphics[width=.9\textwidth]{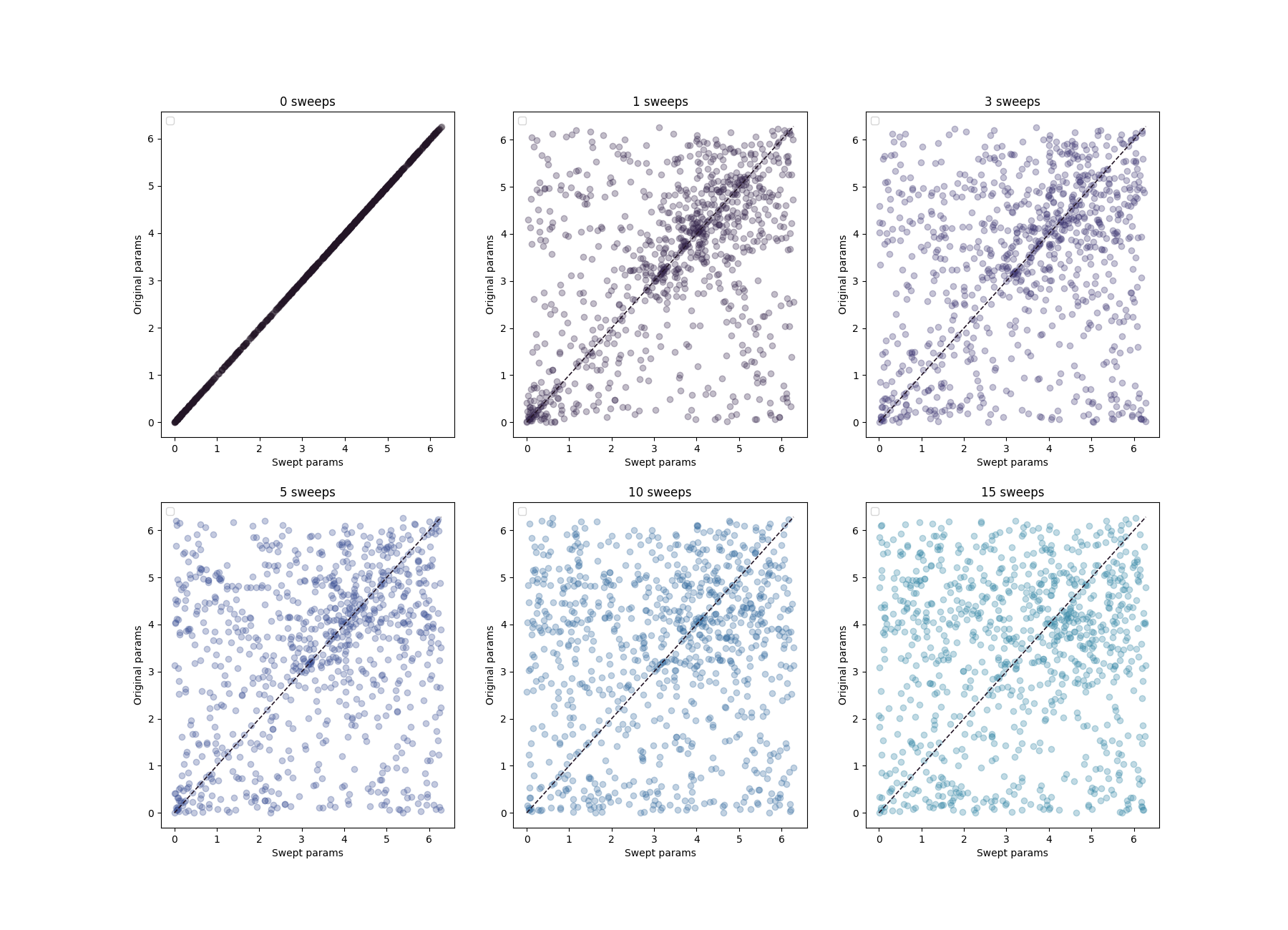}
  \caption{Angle sweeping. Taking a brickwork of two qubit gates and one qubit gates, we show the drift of angles from their original values after successive rounds of sweeping. The dashed indicates perfectly matching the original distribution.}
  \label{fig:sweep_angles}
\end{figure}

\textbf{Correlating gate connectivity}
Beyond the angles themselves, one may hope to unravel the circuit by looking at the gate connectivity in different parts of the circuit. The swaps and permutations make immediate cancellations less obvious, but one could try to reveal the permutations by identifying the relabeling. An attacker could potentially use the connectivity of the gates themselves, e.g. by looking at pairs of qubits that receive two-qubit interactions. Doing this identification would essentially be solving graph isomorphism, which is in NP, though not NP-complete. 
However, with the addition of masking, which replaces a subset of gates with a different set of gates, connectivity changes can be introduced, leading to an even more difficult version of the problem.

\textbf{Transpiler simplifications}
A primary sanity check of these techniques is to make transpilers, such as the one in Qiskit \cite{qiskit_paper}, struggle to identify and simplify the identity block. To illustrate the efficacy of the compiler, we perform a simple test in which an increasing number of swap transformations are enacted in a densely packed all-to-all random circuit and its inverse. Here, every layer contains $N/2$ two-qubit gates where neighbors are determined by taking a random permutation and pairing sequential qubits.  Indeed, before any swap transformations are introduced, the transpiler can find the full simplification, eliminating all two-qubit gates. 

While one may expect that the swaps have a gradual effect on the size of remaining gates which were not transpiled away, we find that actually the difficulty jumps relatively quickly. One can think of this swap interaction as being a bottleneck in the simplification of the full circuit. Thus, even a few bottlenecks may lead to a large fraction of gates surviving in the lightcone of those swaps. 

\begin{figure}[H]
  \centering

  {\includegraphics[width=.54\textwidth]{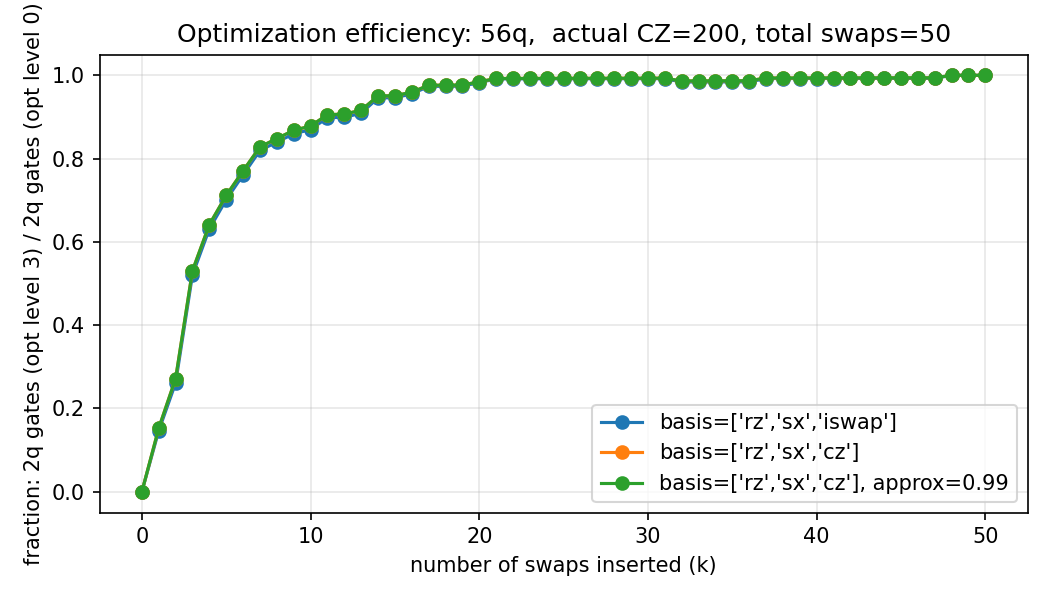}\label{fig:obfuscation_vs_rounds}}
  \caption{Transpiling away the identity. We show the difficulty of transpilers in simplifying these circuits. We start with an identity circuit comprised of $U$ and $U^\dagger$. The qiskit transpiler easily identifies it and reduces it. We then implement a series of swap transformations gradually and show the fraction of two qubits that survive the transpilation.   }  
\end{figure}

\textbf{Middle MPO Attack} To test the resiliency of our HQAP circuits, we employ a MPO compression technique we call the \textit{middle MPO attack}. To build intuition, consider how time evolution is applied to MPS. A gate gets applied as a tensor operation before performing compression back to the MPS form, possibly with higher bond dimension. Our MPO attack is based on a similar principle, except it can be in the middle of the circuit (with gates before and after), as opposed to an MPS with all gates in the future. 

This attack is particularly well suited for circuits which build up the identity from a sequence of random gates followed by the adjoint sequence. By starting with a rank 1 MPO object initialized to the identity in the middle of the circuit, joint evolution of gates on either side results in an MPO which is again representable with low rank since it remains the identity. Repeating this reveals that the whole circuit indeed has a much simpler representation and does not require simulation with exponentially large dimension. 

The combination of sweeping, masking, and applying permutations has thus far prevented this attack from successfully compressing the circuit. This shows that HQAP circuits may remain hard even when the boundary between where $T(U)$ ends and $U^\dagger$ begins is known. Though it is known that worst-case identity checking is QMA complete~\cite{doi:10.1142/S0219749905001067}, an open direction of importance is understanding how our version of identity obfuscation compares. In the meantime, we invite practictioners to try this and other strategies in our online portal. See Section~\ref{sec:open-peaked-challenge}.

\section{HQAP Circuits as a potential candidate for post-quantum encryption}
Throughout the work, we have studied the hardness of finding the peak when the input state is known and set to the all-0 bitstring. In this section, we discuss the possibility of using HQAP circuits as a post-quantum cryptographic protocol~\cite{chen2016report}. We first prove that, for generic quantum circuits, deciding whether a given input circuit is peaked is QCMA-complete. Note this decision problem of distinguishing peaked circuits differs from the task we consider in the previous sections, which is to produce the peaked bitstring given a circuit that is peaked. Therefore, unless BQP = QCMA, which is unlikely to be true under standard complexity assumptions, even a quantum computer cannot distinguish a peaked circuit from a non-peaked one. As a corollary, finding the input computational basis on which a circuit is peaked is hard even for a quantum computer. This provides a necessary condition that HQAP-based encryption can be secure: an efficient adversary without the trapdoor cannot even decide peakedness with non-negligible advantage, let alone recover the planted input string and obtaining the encrypted output string or weight.

On the other hand, the obfuscation protocol we build in this work permits efficient generation and would generate a non-negligible signal for a designated input-output string pair. Therefore, \emph{if} there exists an HQAP algorithm such that ``finding the input string on which the circuit is peaked'' is sufficiently hard on average, we can realize a post-quantum encryption scheme by treating the input state as a secret trapdoor: the encryption party publishes an obfuscated, peaked circuit for the target string, while the decryption party applies the trapdoor to recover it. In this case, correctness follows from peakedness, and security reduces to the hardness of learning the peak without the trapdoor.


\subsection{Deciding whether a quantum circuit is peaked is QCMA-hard}

To this end, we cast the task of distinguishing a peaked circuit from a non-peaked one as the following decision problem:
\begin{definition}[Peakedness check on basis states (PCBS)]
\label{def:PCBS}
Let $U$ be an $n$-qubit unitary and fix thresholds $1 \ge \delta_{\rm yes} > \delta_{\rm no} \ge 0$ with
$\delta_{\rm yes}-\delta_{\rm no} \ge 1/\mathrm{poly}(N)$. The promise problem \textsc{PCBS} asks to decide whether
\[
\textsf{YES: }\exists\,y\in\{0,1\}^N \text{ with } |\langle y|U|y\rangle|\ge \delta_{\rm yes}
\quad\text{vs}\quad
\textsf{NO: }\forall\,y\in\{0,1\}^N,\ |\langle y|U|y\rangle|\le \delta_{\rm no}\,.
\]
\end{definition}
Next, we introduce a complexity class called QCMA. Intuitively, QCMA is the probabilistic version of NP (more formally, MA) with a quantum verifier and a classical witness.

\begin{definition}[QCMA]

Fix completeness--soundness thresholds $1 \ge c > s \ge 0$ with gap $c-s \ge 1/\mathrm{poly}(N)$.
A language $L\subseteq\{0,1\}^\star$ is in $\mathrm{QCMA}$ if there is a classical,
polynomial-time uniform generator that, on input $x$, outputs a polynomial-size quantum
verifier circuit $W_x$ acting on registers
\[
Y \text{ (witness, } N_x \text{ qubits)},\quad
A \text{ (ancillas, } M_x \text{ qubits initialized to } \ket{0^{M_x}}),\quad
B \text{ (one-qubit output)},
\]
such that
\[
\begin{aligned}
\textsf{YES: } & \exists\, y \in \{0,1\}^{N_x}\ \ 
\Pr\!\big[B{=}1 \ \text{after}\ W_x \ \text{on}\ \ket{y}_Y\ket{0^{M_x}}_A\ket{0}_B\big] \ \ge\ c,\\
\textsf{NO: } & \forall\, y \in \{0,1\}^{N_x}\ \ 
\Pr\!\big[B{=}1 \ \text{after}\ W_x \ \text{on}\ \ket{y}_Y\ket{0^{M_x}}_A\ket{0}_B\big] \ \le\ s.
\end{aligned}
\]
Equivalently, writing $\Pi_B := \ket{1}\!\bra{1}_B$ and
$\rho_y := \ket{y}\!\bra{y}_Y \otimes \ket{0^{M_x}}\!\bra{0^{M_x}}_A \otimes \ket{0}\!\bra{0}_B$,
the conditions are $\mathrm{tr}(W_x \rho_y W_x^\dagger \Pi_B) \ge c$ vs.\ $\le s$.
\end{definition}

\medskip

\begin{theorem}[PCBS is QCMA-complete]
\label{thm:PCBS-QCMA-complete}
\textsc{PCBS} is QCMA-complete under a polynomial-time reduction.
\end{theorem}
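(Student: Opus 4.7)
The plan is to prove containment and hardness separately. For $\textsc{PCBS} \in \mathrm{QCMA}$, take the classical witness to be the bitstring $y$ itself and let the verifier prepare $|y\rangle$ on $N$ qubits, apply $U$, measure in the computational basis, and accept iff the outcome equals $y$. The single-shot acceptance probability is $|\langle y|U|y\rangle|^2$, so a YES instance admits a witness with acceptance probability at least $\delta_{\rm yes}^2$ and a NO instance bounds every witness by $\delta_{\rm no}^2$; the gap $\delta_{\rm yes}^2 - \delta_{\rm no}^2 \ge (\delta_{\rm yes} - \delta_{\rm no})\delta_{\rm yes} = \Omega(1/\mathrm{poly}(N))$ is amplifiable to any constant by standard sequential repetition.

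For hardness, I would reduce from the canonical QCMA-complete problem ``quantum circuit acceptance with classical witness'': given a poly-size verifier $W$ on $(Y,A,B)$, decide whether some $y$ achieves $p_{\rm acc}^{(W)}(y) := \Pr[B{=}1 \mid W|y,0,0\rangle] \ge c$ or every $y$ satisfies $p_{\rm acc}^{(W)}(y) \le s$. First amplify $W$ to $c = 1 - \epsilon$, $s = \epsilon$ with $\epsilon$ inverse-exponential via parallel repetition and majority voting, and then define
\[
U \;=\; (W^\dagger \otimes I_C)\cdot \mathrm{CNOT}_{\bar B \to C} \cdot (W \otimes I_C)
\]
on $(Y,A,B,C)$ with a single fresh qubit $C$. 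Writing $W|y,0,0\rangle = \alpha_y|u_0^y,0_B\rangle + \beta_y|u_1^y,1_B\rangle$, a direct amplitude bookkeeping shows $\langle y,0,0,c|U|y,0,0,c\rangle = |\beta_y|^2 = p_{\rm acc}^{(W)}(y)$ for either $c \in \{0,1\}$. In the YES case this exhibits the basis state $|y^*,0,0,0\rangle$ with overlap at least $1-\epsilon$, so we may take $\delta_{\rm yes} = 1-\epsilon$.

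The principal obstacle is the NO case: we need $|\langle z|U|z\rangle| \le \delta_{\rm no}$ for every basis state $z = |y,a,b,c\rangle$, including non-canonical ones with $(a,b) \ne (0,0)$, for which the overlap equals the unconstrained quantity $\Pr[B{=}1 \mid W|y,a,b\rangle]$; moreover, a counting argument on $\mathrm{tr}(\Pi_{B=1}) = 2^{N-1}$ shows the supremum of $\Pr[B{=}1]$ over basis states is at least $1/2$ regardless of $W$. My plan is to pre-compose $W$ with an input-sanitization gadget: adjoin fresh flag and output qubits $(G,F)$, set $G$ via a multi-controlled Toffoli to $1$ exactly when the original ancillas and $F$ all start at $0$, apply $W$ and a CNOT from $B$ into $F$ conditioned on $G$, and apply a Hadamard to $F$ conditioned on $\neg G$. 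Using $F$ in place of $B$ in the CNOT-to-$C$ step of the construction, canonical accepting inputs retain overlap $\approx 1$, canonical rejecting inputs yield $\approx 0$, and non-canonical inputs whose flag and output qubits start at $0$ give exactly $1/2$; setting $\delta_{\rm yes} = 1-\epsilon$ and $\delta_{\rm no} = 1/2 + o(1)$ then furnishes the required polynomial (in fact constant) PCBS gap.

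The hardest remaining subtlety is that basis states in which the newly introduced flag or output qubits themselves start in non-canonical positions can bypass the sanitizer, and adding further flags merely pushes the problem outward. I would resolve this either by nesting additional flag layers together with a dimension-counting bound that the remaining exceptional basis states form a vanishing fraction whose maximum $|\langle z|U|z\rangle|$ is still $1/2 + o(1)$, or by interleaving $W$ with a polynomial-size approximate unitary $2$-design, whose uniform concentration properties on the off-canonical sector then control $\Pr[B{=}1]$ across all non-canonical inputs and close the reduction.
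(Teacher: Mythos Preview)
Your containment argument is correct and matches the paper's. The hardness direction, however, has a genuine gap that neither of your proposed patches closes.

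You correctly isolate the obstruction: for basis states $|y,a,b,c\rangle$ with $(a,b)\neq(0,0)$ the diagonal entry of your $U$ equals $\Pr[B{=}1\mid W|y,a,b\rangle]$, which is uncontrolled by the QCMA promise and (as your trace count shows) must reach at least $1/2$ somewhere. But your two fixes do not work. The ``vanishing fraction'' idea is simply wrong for this problem: \textsc{PCBS} in the NO case demands a bound on the \emph{maximum} of $|\langle z|U|z\rangle|$ over all $2^N$ basis states, so a single exceptional $z$ with large overlap destroys the reduction regardless of how small a fraction such states form; nesting flags only relabels which state is exceptional. The $2$-design idea fares no better: a $2$-design controls second moments and hence average-case diagonal behavior, but gives no worst-case bound on $\max_z|\langle z|VWV^\dagger|z\rangle|$ for any \emph{fixed} design circuit $V$; turning this into a deterministic poly-time reduction would require an entirely new argument you have not supplied.

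The paper sidesteps the regress with a one-line gadget you are missing. Adjoin a single fresh qubit $D$ and prepend
\[
G \;=\; X_D\otimes\bigl(I-\ket{0^a0}\!\bra{0^a0}_{AB}\bigr)\;+\;I_D\otimes\ket{0^a0}\!\bra{0^a0}_{AB},
\]
i.e.\ flip $D$ iff the ancillas are non-canonical, and then \emph{never touch $D$ again} in the rest of the circuit. For any basis state $|z'\rangle$ with $(A,B)\neq(0^a,0)$, the final state has its $D$-bit flipped relative to the input, so $\langle z'|Z|z'\rangle=0$ exactly---and this holds whether $D$ starts at $0$ or $1$, so there is no regress. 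For canonical $(A,B)=(0^a,0)$ the guard is the identity and a short amplitude computation (the paper uses controlled rotations $R(\pm\phi)$ rather than a CNOT, but your CNOT variant works too) gives a diagonal entry that is a monotone function of $p(y)$, yielding the required YES/NO separation. The key idea you are missing is not to try to \emph{control} the computation on bad inputs, but merely to \emph{tag} them orthogonally with a qubit the rest of the circuit ignores.
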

From the definition it is clear that PCBS has the right flavor for QCMA:
in QCMA the verifier's acceptance probability on some classical witness
separates YES and NO instances; in PCBS we ask whether some diagonal
entry of $U$ is large. First, it's easy to see the containment of PCBS $\in$ QCMA:
a QCMA verifier uses $y$ as the witness and estimates
$\Pr[\text{meas.\ }y\text{ after }U|y\rangle]=|\langle y|U|y\rangle|^2$ by sampling; since
$\delta_{\rm yes}^2-\delta_{\rm no}^2\ge 1/\mathrm{poly}(N)^2$, this is distinguishable in $\mathrm{poly}$ time.

Next, we show QCMA-hardness via a gadget that maps acceptance probabilities of any QCMA witness circuit $U$ to solving PCBS problem on an embedded larger circuit $Z$. This proof is inspired by~\cite{wocjan2003two}.

\begin{theorem}[]
\label{thm:flipped-T1}
Let $U$ be a circuit on registers $(Y,A,B)$ and let $P:=\Pi_{B=1}$.
For a basis witness $y$, define
\[
p(y):=\big\langle y,0^a,0\big|\,U^\dagger P U\,\big|y,0^a,0\big\rangle\in[0,1].
\]
Fix $0<\phi<\tfrac{\pi}{2}$ and the one-qubit rotation
$R(\phi)=\begin{pmatrix}\cos\phi&-\sin\phi\\ \sin\phi&\cos\phi\end{pmatrix}$ on a marker qubit $C$.
Define
\[
R_1:=\text{apply }R(\phi)\text{ to }C\text{ controlled on }(A{=}0^a),\qquad
R_2:=\text{apply }R(-\phi)\text{ to }C\text{ controlled on }(B{=}1),
\]
and introduce a fresh guard qubit $D$ (initialized to $\ket0$) with the unitary
\[
G\;:=\;X_D\otimes\bigl(I_{YABC}-\ket{0^{a}0}\!\bra{0^{a}0}_{AB}\bigr)\;+\;I_D\otimes \ket{0^{a}0}\!\bra{0^{a}0}_{AB}.
\]
Next, we set
\[
Z\ :=\ U^\dagger\,R_2\,U\,R_1\,G.
\]
For the canonical basis state $|z\rangle:=|0\rangle_C|0\rangle_D\otimes|y,0^a,0\rangle_{YAB}$ we have
\begin{equation}
\label{eq:interval}
\bigl|\langle z|Z|z\rangle\bigr|^2\ =\ \bigl(p(y) + (1-p(y))\cos\phi\bigr)^{\!2}.
\end{equation}
Moreover, for any basis state $|z'\rangle$ with $(A,B)\neq(0^a,0)$, that is: when the ancillas are not in the canonical input staes
\begin{equation}
\label{eq:noncanon}
\bigl\langle z'\big|\,Z\,\big|z'\bigr\rangle\ =\ 0\qquad\text{(since $G$ flips $D$ to an orthogonal basis state).}
\end{equation}

\paragraph{YES bound.}
If there exists $y$ with $p(y)\ge 1-\varepsilon$, then for the corresponding canonical $|z\rangle$,
\begin{equation}
\label{eq:yes-explicit}
\bigl|\langle z|Z|z\rangle\bigr|^2 \ \ge\
\Bigl((1-\varepsilon)+\varepsilon\cos\phi\Bigr)^{\!2}
\ =\ \Bigl(1-\varepsilon(1-\cos\phi)\Bigr)^{\!2}.
\end{equation}

\paragraph{NO bound.}
If for all $y$ we have $p(y)\le \varepsilon$, then for every canonical $|z\rangle$,
\begin{equation}
\label{eq:no-explicit}
\bigl|\langle z|Z|z\rangle\bigr|^2 \ \le\
\Bigl(\varepsilon+(1-\varepsilon)\cos\phi\Bigr)^{\!2}
\ \le\ \bigl(\cos\phi+\varepsilon\bigr)^{\!2}.
\end{equation}
\end{theorem}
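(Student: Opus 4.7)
The strategy will be to compute $\langle z|Z|z\rangle$ directly by propagating the canonical basis state through $Z = U^\dagger R_2 U R_1 G$ one factor at a time, and then read the bounds off the resulting closed form. The essential preparatory step is the two-branch decomposition
\[
U\,\ket{y,0^a,0}_{YAB}\;=\;\alpha\,\ket{\phi_0} + \beta\,\ket{\phi_1},
\]
where $\ket{\phi_0}$ and $\ket{\phi_1}$ are normalized components in the $B{=}0$ and $B{=}1$ subspaces, respectively. By the definition $p(y) = \langle y,0^a,0|U^\dagger P U|y,0^a,0\rangle$ with $P=\Pi_{B=1}$, one has $|\beta|^2 = p(y)$ and $|\alpha|^2 = 1-p(y)$.

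I would then track the five factors in order. The guard $G$ acts as the identity on the canonical input because the projector $\ket{0^{a}0}\!\bra{0^{a}0}_{AB}$ fires and $D$ is left unflipped. The controlled rotation $R_1$ rotates $C$ to $\cos\phi\,\ket{0}_C+\sin\phi\,\ket{1}_C$ since $A=0^a$. Applying $U$ yields the two-branch state on $YAB$, after which $R_2$ conditionally applies $R(-\phi)$ to $C$ only on the $B{=}1$ branch. The key identity $R(-\phi)R(\phi)=I$ then resets $C$ exactly to $\ket{0}_C$ on the $B{=}1$ branch, while the $B{=}0$ branch retains $\cos\phi\ket{0}_C+\sin\phi\ket{1}_C$. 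Finally, $U^\dagger$ followed by projecting against $\bra{y,0^a,0}_{YAB}$ extracts $\alpha^*$ and $\beta^*$ from the two branches, and the overlap on $C\otimes D$ contributes $\cos\phi$ on the $B{=}0$ branch and $1$ on the $B{=}1$ branch. Summing gives $\langle z|Z|z\rangle=(1-p(y))\cos\phi+p(y)$, which squares to~\eqref{eq:interval}.

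The non-canonical claim follows from the observation that register $D$ is touched only by $G$: the factors $R_1,U,R_2,U^\dagger$ all act trivially on $D$. Whenever $(A,B)\neq(0^a,0)$, $G$ applies $X_D$, so the output differs from $\ket{z'}$ on $D$ and the overlap vanishes, giving~\eqref{eq:noncanon}. The YES/NO bounds then follow by substituting the extremal values of $p(y)$ into the canonical formula: $p(y)\ge 1-\varepsilon$ gives $\langle z|Z|z\rangle\ge(1-\varepsilon)+\varepsilon\cos\phi=1-\varepsilon(1-\cos\phi)$, and squaring yields~\eqref{eq:yes-explicit}; conversely $p(y)\le\varepsilon$ for every $y$ gives $\langle z|Z|z\rangle\le\varepsilon+(1-\varepsilon)\cos\phi\le\cos\phi+\varepsilon$, which squares to~\eqref{eq:no-explicit}.

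The only conceptually subtle point I anticipate is the $R(-\phi)R(\phi)=I$ cancellation on the $B{=}1$ branch: without this precise uncomputation, the marker qubit $C$ would remain entangled with $\ket{\phi_0}$ and $\ket{\phi_1}$ and the diagonal amplitude would not collapse to the clean closed form in~\eqref{eq:interval}. Once that cancellation is verified, the rest of the proof is mechanical amplitude bookkeeping, and the non-canonical case together with the final bounds are essentially immediate.
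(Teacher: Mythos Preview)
Your proposal is correct and follows essentially the same approach as the paper's proof: both decompose $U\ket{y,0^a,0}$ into its $B{=}0$ and $B{=}1$ branches, propagate through $G,R_1,U,R_2,U^\dagger$ one factor at a time, and use the $R(-\phi)R(\phi)=I$ cancellation on the $B{=}1$ branch to collapse the overlap to $p(y)+(1-p(y))\cos\phi$. The non-canonical case and the YES/NO bounds are handled identically via the $D$-flip and monotonicity of $f(p)=p+(1-p)\cos\phi$.
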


\begin{proof}[Proof]
\emph{Step 0 (apply $G$).}
Recall $Z = U^\dagger R_2 U R_1 G$, so $G$ acts first.
If $(A,B)=(0^a,0)$, $G$ is the identity and the analysis below applies unchanged.
If $(A,B)\neq(0^a,0)$, i.e., in the non-canonical input case, then $G$ flips $D$; since the rest of the circuit acts trivially on $D$,
$\langle z'|Z|z'\rangle=0$, proving \eqref{eq:noncanon}.

In the canonical case, we expand
\(
U\,|y,0^a,0\rangle
 = c_1\,|1\rangle_B|\psi_1\rangle + c_0\,|0\rangle_B|\psi_0\rangle,
\)
with $|c_1|^2=p(y)$ and $|c_0|^2=1-p(y)$.

\emph{Step 1 (apply $R_1$).}
Since $A=0^a$ on the canonical input, $R_1$ rotates the marker:
\(
|0\rangle_C \mapsto |c_0\rangle_C:=R(\phi)|0\rangle
=\cos\phi\,|0\rangle+\sin\phi\,|1\rangle.
\)

\emph{Step 2 (apply $U$).} We get
\[
|c_0\rangle_C\otimes U|y,0^a,0\rangle
= c_1\,|c_0\rangle_C\otimes|1\rangle_B|\psi_1\rangle
 + c_0\,|c_0\rangle_C\otimes|0\rangle_B|\psi_0\rangle.
\]

\emph{Step 3 (apply $R_2$ controlled on $B=1$).}
On the $B{=}1$ branch the marker sees $R(-\phi)$, so $R(-\phi)R(\phi)|0\rangle=|0\rangle$; on the $B{=}0$ branch nothing happens.
Hence the joint state becomes
\[
c_1\,|0\rangle_C|1\rangle_B|\psi_1\rangle \ +\ c_0\,|c_0\rangle_C|0\rangle_B|\psi_0\rangle.
\]

\emph{Step 4 (apply $U^\dagger$ and overlap).}
Applying $U^\dagger$ on $(YAB)$ and projecting onto $\langle 0|_C\otimes\langle y,0^a,0|$ gives
\[
\langle z|U^\dagger R_2 U R_1|z\rangle
= |c_1|^2 + \cos\phi\,|c_0|^2
= p(y) + (1-p(y))\cos\phi,
\]
which proves \eqref{eq:interval}. Notice that the quantity is real and nonnegative for $\phi\in(0,\tfrac\pi2)$.
The bounds \eqref{eq:yes-explicit}-\eqref{eq:no-explicit} follow from the monotonicity of $f(p):=p+(1-p)\cos\phi$.
\end{proof}

Given a QCMA instance $x$ with verifier $W_x$ and thresholds $c>s$, build $Z$ from $U:=W_x$ as in Theorem~\ref{thm:flipped-T1} using any fixed constant $\phi\in(0,\tfrac\pi2)$ (e.g.\ $\phi=\pi/3$).
Set PCBS thresholds
\[
\delta_{\rm yes}:=f(c)=c+(1-c)\cos\phi,\qquad
\delta_{\rm no}:=f(s)=s+(1-s)\cos\phi.
\]
Then $\delta_{\rm yes}-\delta_{\rm no}=(1-\cos\phi)(c-s)\ge 1/\mathrm{poly}(N)$, and:
\[
x\in L \Longleftrightarrow \exists\,y:\ |\langle z|Z|z\rangle| \ge \delta_{\rm yes}
\quad\text{vs}\quad
x\notin L \Longleftrightarrow \forall\,y:\ |\langle z|Z|z\rangle|\le \delta_{\rm no},
\]
establishing QCMA-hardness. Together with our previous containment argument, this proves Theorem~\ref{thm:PCBS-QCMA-complete}. An immediate corollary of this theorem is:
\begin{corollary}
    There exists no quantum polynomial-time algorithm that, given a generic $\text{poly}(N)$-sized quantum circuit $U$
(described classically), outputs the peaked input string with non-negligible advantage than random guess, unless $\mathrm{BQP}=\mathrm{QCMA}$. 
\end{corollary}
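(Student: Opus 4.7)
The plan is to argue by contraposition: assume a polynomial-time quantum algorithm $\mathcal{A}$ exists that, on input a classical description of a circuit $U$, outputs a candidate input string $y$ such that $U$ is peaked on $|y\rangle$ with success probability at least $2^{-N}+1/\mathrm{poly}(N)$ (i.e.\ non-negligible advantage over uniform guessing). I will then use $\mathcal{A}$ as a subroutine to build a BQP decider for PCBS, which by Theorem~\ref{thm:PCBS-QCMA-complete} forces $\mathrm{BQP}=\mathrm{QCMA}$.

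The first step is a standard search-to-decision reduction based on the fact that candidate peaked strings can be verified efficiently on a quantum device. Given a PCBS instance $U$ with thresholds $\delta_{\rm yes}>\delta_{\rm no}$ separated by $1/\mathrm{poly}(N)$, I call $\mathcal{A}(U)$ to obtain a candidate $y$, prepare $|y\rangle$, apply $U$, and measure in the computational basis. Since $\Pr[\text{outcome}=y]=|\langle y|U|y\rangle|^2$, repeating this measurement $O(\mathrm{poly}(N))$ times and applying a Chernoff bound produces an estimate of $|\langle y|U|y\rangle|^2$ to additive precision strictly below $\delta_{\rm yes}^2-\delta_{\rm no}^2$, which suffices to decide which side of the promise $y$ sits on with exponentially small error.

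Next, to boost the $1/\mathrm{poly}(N)$ success of $\mathcal{A}$ to near certainty, I would run the generate-and-verify subroutine $k=\mathrm{poly}(N)$ independent times and output YES iff at least one round yields a candidate whose estimate exceeds $(\delta_{\rm yes}^2+\delta_{\rm no}^2)/2$. In a YES instance, each call independently produces a peaked string with probability at least $1/\mathrm{poly}(N)$, so the chance that all $k$ rounds fail is at most $(1-1/\mathrm{poly}(N))^k$, driven below $1/3$ for suitable $k$. In a NO instance, the promise forbids any $y$ from exceeding $\delta_{\rm no}^2$, so all rounds reject with overwhelming probability regardless of what $\mathcal{A}$ returns. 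The overall procedure is uniform and polynomial-time on a quantum machine, so $\mathrm{PCBS}\in\mathrm{BQP}$, and combined with Theorem~\ref{thm:PCBS-QCMA-complete} and the trivial inclusion $\mathrm{BQP}\subseteq\mathrm{QCMA}$ this gives $\mathrm{BQP}=\mathrm{QCMA}$.

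The main care I anticipate lies in two probability accountings done in tandem: separating the $1/\mathrm{poly}(N)$ advantage from the negligible $2^{-N}$ random-guess baseline so that independent amplification actually reaches a constant success probability, and choosing the verification sample count so that a union bound over all $k$ rounds still keeps the total error inside the standard BQP threshold. I expect no conceptual obstacle beyond these bookkeeping choices, since the verification step is unconditional and the search-to-decision reduction is exactly what QCMA-completeness is designed to expose; the technical crux is really packaging ``non-negligible advantage'' into the form that standard amplification arguments can consume.
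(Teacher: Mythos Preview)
Your proposal is correct and follows essentially the same route as the paper's own argument: assume a QPT peak-finder exists, use it to produce a candidate $y$, verify by preparing $|y\rangle$, applying $U$, and measuring, thereby placing PCBS in BQP and forcing $\mathrm{BQP}=\mathrm{QCMA}$ via Theorem~\ref{thm:PCBS-QCMA-complete}. The paper gives only a one-sentence sketch of this reduction; your added detail on amplifying the $1/\mathrm{poly}(N)$ advantage and controlling the Chernoff/union-bound accounting is standard bookkeeping that the paper omits, not a different approach.
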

This can be proved by contradiction: assume that a quantum polynomial-time (QPT) algorithm can return the secret peaked input, then running the quantum circuit would generate the desired peakedness, resolving PCBS.
\subsection{Symmetric encryption with HQAP circuits}
The QCMA-hardness in the previous section suggests that finding the peaked input--output pair of a
peaked circuit is hard even for quantum algorithms. This motivates using HQAP-generated circuits
as ciphertexts: in our obfuscation protocols, a quantum adversary is not expected to efficiently
identify the designated peaked input or output. Intuitively, hiding a single input--output pair
among exponentially many basis labels makes it difficult to locate without the trapdoor. We conjecture:

\begin{conjecture}[Peak-Search Hardness (PSH)]
    There exist an efficient HQAP protocol which, given $(x^\star,y^\star)\in\{0,1\}^N\times\{0,1\}^N$ and fresh coins $\rho$,
the procedure $\mathsf{HQAP}(x^\star,y^\star;\rho)$ outputs a circuit $E$ such that
measuring $E\ket{x^\star}$ in the computational basis returns $y^\star$ with probability at least
$\delta\in(0,1)$. For all $x\neq x^\star$, the output distribution of $E\ket{x}$ is not
comparably concentrated on $y^\star$. 

Moreover, the security of this protocol is guaranteed in the adversarial setting: give the adversary the classical
description $\mathrm{desc}(E)$. Then no QPT outputs $x^\star$ (or the pair
$(x^\star,y^\star)$) with probability better than $2^{-n}+\text{negl}(n)$.
\end{conjecture}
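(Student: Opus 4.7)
The plan is to split the conjecture into two essentially independent parts and attack them separately: (i) a constructive claim that an HQAP procedure of the stated form exists, and (ii) a cryptographic claim that its output hides the trapdoor $x^\star$ against QPT adversaries. I would address (i) by adapting the protocol of Section~\ref{sec:peaked_protocol} and (ii) by a hybrid-argument reduction rooted in Theorem~\ref{thm:PCBS-QCMA-complete}.

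For part (i), I would generalize the protocol of Section~\ref{sec:peaked_protocol} so that the planted input can be any $x^\star\in\{0,1\}^N$ rather than only $0^N$. Concretely, train $R\triangleright P$ as in the paper to peak on $\ket{y^\star}$ from $\ket{0^N}$, then pre-compose with $\bigotimes_i X^{x^\star_i}$ and absorb those $X$ gates into the random prefix using the same swap, sweep, and mask transformations used for obfuscation, so that $\mathrm{desc}(E)$ carries no visible marker indicating $x^\star$. Correctness on $\ket{x^\star}$ with peak weight at least $\delta$ is then inherited from the training objective and is supported by the empirical results already reported. The ``non-comparable concentration on $y^\star$'' clause for $x\neq x^\star$ requires a separate argument; for that I would invoke a random-circuit anticoncentration statement, namely that if $T[R]$ is close to Haar-random in the relevant sense, then for any fixed $x\neq x^\star$ the amplitude $\langle y^\star|E|x\rangle$ is typically of order $2^{-N/2}$, so $|\langle y^\star|E|x\rangle|^2 \ll \delta$ with overwhelming probability over the HQAP coins $\rho$.

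For part (ii), the natural starting point is Theorem~\ref{thm:PCBS-QCMA-complete} together with its corollary, which already give worst-case hardness of locating a peaked input from a generic circuit description. To lift this to average-case hardness over HQAP-sampled ciphertexts, I would set up a hybrid argument in which each scrambling block $T[\cdot]$ is gradually replaced by a post-quantum pseudorandom unitary (conjecturally constructible from post-quantum one-way functions). Indistinguishability between the true HQAP distribution and the fully pseudorandom hybrid follows from the PRU security game. In the fully pseudorandom world, any QPT extractor of $x^\star$ can be converted into a PCBS distinguisher on the embedded circuit $Z$ built from $E$, by running $E$ on its candidate input $\hat x$ and checking whether $y^\star$ appears with empirical frequency at least $\delta_{\rm yes}$. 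Chaining the hybrid bounds with the PCBS reduction then yields the target inequality $\Pr[\text{extract } x^\star]\le 2^{-N}+\mathrm{negl}(N)$.

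The principal obstacle is the gap between worst-case QCMA-hardness of PCBS and average-case cryptographic hardness for the very specific circuit distribution produced by HQAP. Theorem~\ref{thm:PCBS-QCMA-complete} says nothing about how structured an obfuscated $T[U]\triangleright U^\dagger$ block actually looks to an attacker, and each structural attack discussed in Section~\ref{sec:classical_methods} (angle-correlation scans, connectivity/graph-isomorphism probes, middle-MPO compression) is a concrete place where the reduction could fail. Ruling all of them out simultaneously is morally the problem that post-quantum indistinguishability obfuscation is designed to solve, so a clean conditional statement of the form ``assuming a post-quantum iO (or PRU ensemble with appropriate security), there is an HQAP protocol satisfying Peak-Search Hardness'' is, in my view, the realistic target; the unconditional conjecture is likely to remain open.
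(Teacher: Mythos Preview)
The paper does not prove this statement: it is explicitly labeled a \emph{conjecture} and is left open. The surrounding text makes this clear, ending with ``these properties hinge on an average-case PSH conjecture holding for our instance generation: this is precisely why establishing rigorously or empirically testing on the hardness properties of HQAP circuits remain a crucial open problem.'' So there is no ``paper's own proof'' to compare against; your proposal is an attempt to prove something the authors deliberately did not.

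Your own final paragraph is the honest assessment, and it is correct: the unconditional statement is almost certainly out of reach, and the best one can hope for is a conditional version under some post-quantum iO or PRU assumption. But the reduction you sketch in part~(ii) does not actually close that gap. Two concrete problems. First, the hybrid step ``replace each scrambling block $T[\cdot]$ by a post-quantum PRU'' presupposes that the HQAP scrambling is computationally indistinguishable from a PRU to a QPT adversary holding the full circuit description; that indistinguishability is essentially the PSH assumption you are trying to establish, so the argument is circular. Second, the step ``any QPT extractor of $x^\star$ can be converted into a PCBS distinguisher'' goes in the wrong direction for what you need. Theorem~\ref{thm:PCBS-QCMA-complete} and its corollary give \emph{worst-case} hardness: there exist circuits for which finding the peaked input is hard unless $\mathrm{BQP}=\mathrm{QCMA}$. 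To get the cryptographic bound $\Pr[\text{extract }x^\star]\le 2^{-N}+\mathrm{negl}(N)$ you need \emph{average-case} hardness over the specific distribution $\mathsf{HQAP}(x^\star,y^\star;\rho)$, and nothing in the PCBS reduction produces instances drawn from that distribution. A worst-case-to-average-case reduction for this ensemble would itself be a major result, and you have not supplied one.

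Part~(i) is more plausible as a heuristic construction, and indeed mirrors what the paper already does informally, but the ``non-comparable concentration'' clause via Haar-randomness of $T[R]$ is again an assumption, not a consequence of the protocol: the whole point of Section~\ref{sec:classical_methods} is that we do not know whether the obfuscated block behaves like a Haar-random unitary to an attacker.
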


Then with a slight modification to that HQAP circuit protocol can also be used for encrypting messages with quantum-safe property. Here is a sketch of the protocol: Let $k\in\{0,1\}^d$ be a pre-shared key and $F_k$ a pseudo random function (PRF) to derive per-block designated inputs.
To encrypt $M$, parse it into $n$-bit blocks $s_1,\dots,s_t$. For each block $i$:
\begin{enumerate}
  \item Publish a nonce $\mathsf{ctr}_i$ and set $x_i^\star := F_k(\mathsf{ctr}_i)$.
  \item Set $y_i^\star := s_i$ (optionally, mask as $s_i\oplus G_k(\mathsf{ctr}_i)$ with an independent PRF $G$).
  \item Sample $\rho_i$ and compute $E_i \leftarrow \mathsf{HQAP}(x_i^\star,y_i^\star;\rho_i)$.
  \item Output ciphertext $C_i := \big(\mathrm{desc}(E_i),\,\mathsf{ctr}_i\,\big)$.
\end{enumerate}
The receiver decrypts by computing $x_i^\star=F_k(\mathsf{ctr}_i)$, running $E_i$ on $\ket{x_i^\star}$ for
$N_s=O(\delta^{-2})$ shots, taking the empirical mode $\hat y_i$, and outputting
$\hat s_i:=\hat y_i$. As a check, the receiver accepts only if the measured peak weight
$\widehat{p}^{\,\text{peak}}=\frac{1}{N_s}\sum_{j=1}^N\mathbf{1}\{y_{i,j}=\hat y_i\}$ exceeds a
threshold $\delta_{\min}$.

It's not hard to see that the protocol remains secure under any QPT adversary
against one-wayness under chosen-plaintext attacks (OW-CPA): Assume PRF security for $F$ and PSH for the HQAP distribution.  Given a ciphertext block $C_i=(\mathrm{desc}(E_i),\mathsf{ctr}_i)$, recovering $s_i$ requires either (a) computing
$x_i^\star=F_k(\mathsf{ctr}_i)$ (which breaks the PRF), or (b) finding $x_i^\star$ by probing $E_i$
on inputs and detecting the heavy output (which breaks PSH), then reading off $y_i^\star$ and
removing the mask via $G_k$ (which again requires the PRF). Thus, under PRF security and PSH, the
scheme achieves OW-CPA.

On the receiver side, checking if the output peakedness have reached a designated threshold provides a practical sanity layer against malformed or dishonest ciphertexts. Beyond this, this HQAP protocol offers three unique properties other PQC schemes may lack: (i) validity here is semantic, an inherent property of how the ciphertext was generated and, crucially, receiver-local; (ii) the same machinery naturally doubles as a proof-of-quantumness-flavored check; and (iii) unlike standardized post-quantum cryptography (e.g., ML-KEM/ML-DSA/SLH-DSA)~\cite{chen2016report}, which is deliberately classical and deployable on today's networks, our protocol is inherently quantum because decryption or validation requires executing the public circuit on the designated input. Of course, these properties hinge on an average-case PSH conjecture holding for our instance generation: this is precisely why establishing rigorously or empirically testing on the hardness properties of HQAP circuits remain a crucial open problem of broad applications.

{}

\section{Discussion}
In this work, we have presented a protocol for constructing peaked quantum circuits that is both scalable and verifiable. Our construction builds on the original peaked circuit framework by introducing pre-processing steps and architectural refinements that allow these circuits to scale well beyond previous instantiations, in both qubit number and circuit depth. Importantly, we have designed specific circuits that stretch the computational limits of existing quantum processors while fitting within the error budget. 
At the same time, we have implemented a suite of classical simulation attack strategies, ranging from matrix product states to tensor network contraction methods to Pauli path simulators, that represent the current state of the art in classical circuit simulation. Many of these methods have been used in past work to challenge or reinterpret quantum advantage claims, including those involving IBM's utility-scale benchmarking efforts. 

These results offer a timely contribution to the evolving landscape of quantum advantage. It is now broadly accepted that certain quantum processors can outperform classical computers on well-posed computational tasks, at least in the absence of strong classical shortcuts. However, in the case of RCS, these demonstrations are often based on output distributions that are difficult to interpret or verify at scale. Our work addresses this gap by providing a construction in which the quantum output is both easy to produce and easy to verify, yet remains classically intractable to simulate by all known methods. This puts us in what might be described as the \textit{heuristic} phase of quantum advantage: a period in which empirical results, rather than unconditional complexity-theoretic proofs, serve as the main form of evidence.

In this sense, our peaked circuits occupy a role not unlike that of RSA in classical cryptography. Just as the hardness of factoring large integers is widely believed, despite the lack of a formal proof, because of decades of failed attempts to break it, the intractability of simulating peaked circuits may come to be seen as an empirical fact. 

On the other hand, there may be continual progress on the theoretical side, bridging the gap between what is provably hard and our constructions. In this work, we build on the existing literature on the hardness of peaked circuits. In general, a peaked circuit may lead to concentration on a particular output bitstring from an input state other than the all zeros case. When the input bitstring is unknown, determining whether a given circuit is peaked is QCMA-hard. Establishing a concrete connection between these complexity results and our existing construction would greatly add to the strength of claims of quantum advantage.

We do not claim that classical simulation of our peaked circuits is impossible. Indeed, further analysis may reveal weaknesses in specific instances, or new classical algorithms may improve upon current simulation strategies. But in the absence of such breakthroughs, we view these circuits as offering a valuable benchmark for the quantum community. They provide a target that is both meaningful and falsifiable, anyone who can simulate them classically is encouraged to do so (see Section~\ref{sec:open-peaked-challenge}). In this way, our work is as much a challenge as it is a result. By continuing to scale these constructions and openly testing their classical boundaries, we hope to sharpen the contours of what practical quantum advantage looks like in the NISQ era.

\section{Open Peaked Circuits challenge}\label{sec:open-peaked-challenge}
We have observed a big gap between quantum and all known classical tenchiques, however we do not exclude the possibility that a smarter classical technique exists.
Hence we are announcing a peaked circuits challenge, where we will be releasing increasingly difficult peaked circuits to cement the runtime gap between quantum and classical solutions. These circuits can be found on 
\href{https://app.bluequbit.io/hackathons/oEOtLSSrPSVH60Ah}{BlueQubit Peak Portal} where anyone is welcome to submit their classical solutions as well. This is similar to the RSA challenge announced in the 1990s and expanded in the 2000s~\cite{Kaliski2005}, where the challenge was to factor a large product of primes. The hardness of the challenge was never formally proved, but it was heuristically agreed upon based on many tries from the community. 

\section*{Acknowledgements}
We thank the entire BlueQubit team for contributions to the BlueQubit SDK and, in particular, to the Pauli Path and MPS simulators used for the classical simulations in this work. We are grateful to the Quantinuum team -- particularly its Startup Partner Program --- for sustained support and for facilitating access to the System Model H2 trapped-ion processor. We also thank Scott Aaronson and Dima Abanin for numerous valuable discussions and feedback.


\printbibliography

\end{document}